\renewcommand{\leq}{\leqslant}
\newcommand{\cA}{\mathcal{A}}
\newcommand{\cB}{\mathcal{B}}
\newcommand{\cX}{\mathcal{X}}
\newcommand{\cP}{\mathcal{P}}
\newcommand{\LL}{\sim_{ll}}
\newcommand{\LR}{\sim_{lr}}
\newcommand{\RL}{\sim_{rl}}
\newcommand{\RR}{\sim_{rr}}
\newcommand{\calS}{\mathcal{S}}
\newcommand{\acrostyle}[1]{\ensuremath{\mathrm{#1}}\xspace}
\newcommand{\NFT}{\acrostyle{NFT}}
\newcommand{\NFTs}{\acrostyle{NFTs}}
\newcommand{\DFT}{\acrostyle{DFT}}
\newcommand{\TWNFT}{\acrostyle{2NFT}}
\newcommand{\TWDFT}{\acrostyle{2DFT}}
\newcommand{\twdft}{\TWDFT}
\newcommand{\NFA}{\acrostyle{NFA}}
\newcommand{\DFA}{\acrostyle{DFA}}
\newcommand{\TWDFA}{\acrostyle{2DFA}}
\newcommand{\sst}{\acrostyle{SST}}
\newcommand{\FO}{\acrostyle{FO}}
\newcommand{\FOT}{\acrostyle{FOT}}
\newcommand{\MSO}{\acrostyle{MSO}}
\newcommand{\MSOT}{\acrostyle{MSOT}}
\newcommand{\stm}{\acrostyle{STM}}
\newcommand{\ftm}{\acrostyle{FTM}}
\newcommand{\Vars}{\mathcal{X}}
\newcommand{\seq}[1]{\langle #1 \rangle}
\newcommand{\map}{\phi}
\newcommand{\inp}{\textsf{in}}
\newcommand{\outp}{\textsf{out}}
\newcommand{\prop}[1]{\textbf{\textsf{P}$_{#1}$}}
\newcommand\ins{\mathbbmtt{i}}
\newcommand\outs{\mathbbmtt{o}}
\newcommand{\tsig}{\tilde{\sigma}}
\newcommand{\bh}{\operatorname{bh}}
\title{Aperiodic String Transducers~\thanks{This work is supported by
the ARC project Transform (French speaking community of Belgium), the
Belgian FNRS PDR project Flare,
and the
PHC project VAST (35961QJ) funded by Campus France and WBI.}}
\author{Luc Dartois\inst{1}, Ismaël Jecker\inst{1}, Pierre-Alain Reynier\inst{2}}
\authorrunning{Dartois, Jecker, Reynier}
\institute{Université Libre de Bruxelles, Belgium
\and Aix-Marseille Université, CNRS, LIF UMR 7279, France}
  	\date{\today}
\begin{document}
\maketitle

\vspace{-.5cm}


\begin{abstract}
Regular string-to-string functions enjoy a nice triple characterization through deterministic two-way transducers (\twdft), streaming string transducers (\sst) and MSO definable functions.
This result has recently been lifted to FO definable functions, with equivalent representations
by means of \emph{aperiodic} \twdft
and \emph{aperiodic} 1-bounded
\sst, extending a well-known result on regular languages. 
In this paper, we give three direct transformations: $i)$ from 1-bounded \sst to \twdft,
$ii)$ from \twdft to copyless \sst, and
$iii)$ from $k$-bounded to $1$-bounded \sst.
We give the complexity of each construction and also prove that they preserve the aperiodicity of transducers.
As corollaries, we obtain that FO definable string-to-string
functions are equivalent to \sst whose transition
monoid is finite and aperiodic, and to aperiodic copyless \sst.
\end{abstract}

\section{Introduction}


The theory of regular languages constitutes a cornerstone in theoretical computer science. 
Initially studied on languages of finite words, it has since been extended in numerous directions,
 including finite and infinite trees. Another natural extension is moving from languages to 
 transductions. We are interested in this work in string-to-string transductions, and more precisely in string-to-string functions. One of the strengths of the class of regular languages is their equivalent
  presentation by means of automata, logic, algebra and regular expressions. The class of so-called \emph{regular string functions} enjoys a similar multiple presentation. 
  It can indeed be alternatively defined using deterministic two-way finite state transducers
   (\TWDFT), using Monadic Second-Order graph transductions interpreted 
   on strings (\MSOT)~\cite{EH01}, and using the model of 
   streaming string transducers (\sst)~\cite{AC10}. 
    More precisely, regular string functions are equivalent to different classes of \sst, namely 
 copyless \sst~\cite{AC10} and $k$-bounded \sst, for every positive integer $k$~\cite{AFT12}. 
   Different papers~\cite{EH01,AC10,AFT12,ADT13} have proposed transformations 
   between \TWDFT, \MSOT and \sst, summarized on 
   Figure~\ref{fig:overview}.


The connection between automata and logic, which has been very fruitful for model-checking for instance, also needs to be investigated in the framework of transductions. As it has been done for regular languages, an important objective is then to provide similar logic-automata connections for subclasses of regular functions, providing decidability results for these subclasses. As an illustration, the class of rational functions (accepted by one-way finite state transducers) owns a simple characterization in terms of logic, as shown in~\cite{Filiot-ICLA15}. The corresponding logical fragment is 
  called order-preserving \MSOT. The decidability of the one-way definability of a two-way transducer proved in~\cite{FGRS-LICS13} thus yields the decidability of this fragment 
  inside the class of \MSOT.

\begin{figure}[t]
\begin{tikzpicture}[->,>=latex,node distance=3cm]
 \begin{scope}
  \node (1sst) at (0,0) {(aperiodic) $1$-b. \sst};
  \node (mso) at (4.5,-3) {(\FOT) \MSOT};
  \node (d1) at (3.55,-2.9) {};
  \node (d2) at (-3.3,-2.9) {};
  \node (d3) at (3.55,-3.1) {};
  \node (d4) at (-3.3,-3.1) {};
  \node (2w) at (-4.5,-3) {(aperiodic) \TWDFT};
  \node (csst) at (-4.5,0) {(aperiodic) copyless \sst};
  \node (ksst) at (4.5,0) {(aperiodic) $k$-b. \sst};
  \path[thick] 
  (csst) edge [bend right=5, above, pos=.55] node {\small \cite{AC10,AFT12}} (mso)
  (csst) edge [ultra thick, bend left=15,right,pos=.5,align=center,dashed] (2w)
  (ksst) edge [ultra thick, bend left=25,below,pos=.5,dashed] ([xshift=1.3cm,yshift=-.2cm]1sst)
  (ksst) edge [bend right=15,above,pos=.5] node {\small \cite{AFT12}} (csst)
  (2w) edge [below, bend right=15] node {\small \cite{AFT12}} (1sst)
  (2w) edge [ultra thick,dashed,bend left=15,left,pos=.5,align=center] node {\small \cite{AC10}} (csst)
  ([xshift=-2cm] mso) edge [bend left=5, below,pos=.6] node {\small \cite{ADT13}} (1sst)
  ([xshift=.7cm] mso) edge [<->,above,pos=.4, dotted, thick] node {\small \cite{FKT14}} ([xshift=.7cm,yshift=-.2cm] 1sst)
  (d1) edge [<->,above,pos=.5,align=center,thick] node {\small \cite{EH01}} (d2)
  (d3) edge [<->,below,pos=.5,align=center,dotted,thick] node {\small \cite{CD15}} (d4)
  (1sst) edge [color=black!0] node {\color{black}{$\subseteq$}} (ksst)
  (1sst) edge [ultra thick,dashed,bend right=15] (2w)
  (csst) edge [color=black!0] node {\color{black}{$\subseteq$}} (1sst)
  ;
  
  \end{scope}
\end{tikzpicture}
%
%
\caption{{\small Summary of transformations between equivalent models. $k$-b. 
stands for $k$-bounded. Plain (resp. dotted) arrows concern regular models (resp. bracketed models).  
Original constructions presented in this paper are depicted by thick dashed arrows
and are valid for both regular and aperiodic versions of the models.
}}\label{fig:overview}
\end{figure}
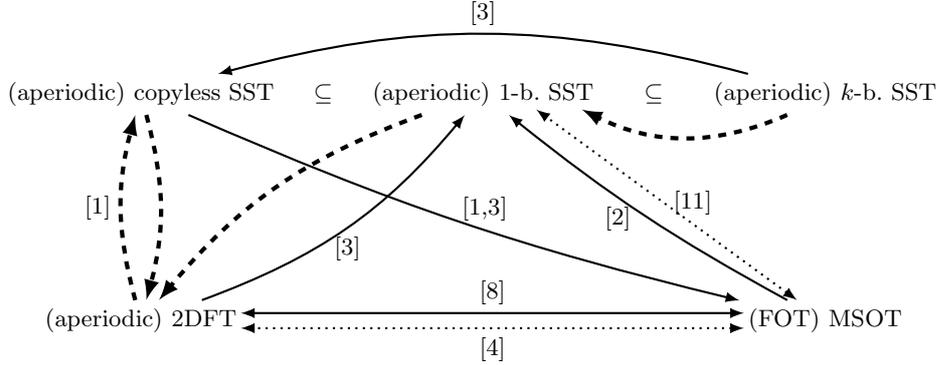

 
 The first-order logic considered with order predicate constitutes an important 
 fragment of the monadic second order logic. It is well known that languages definable 
 using this logic are equivalent to those recognized by finite state automata whose 
 transition monoid is aperiodic (as well as other models such as star-free regular expressions). 
 These positive results have motivated the study of similar connections between first-order 
 definable string transformations (\FOT) and restrictions of state-based transducers models. 
 Two recent works provide such characterizations 
 for $1$-bounded \sst and \TWDFT respectively~\cite{FKT14,CD15}.  To this end, the authors study a 
 notion of transition monoid for these transducers, and prove that \FOT is expressively 
 equivalent to transducers whose transition monoid is aperiodic by providing back and 
 forth transformations between \FOT and $1$-bounded aperiodic \sst (resp. aperiodic \TWDFT). 
In particular, \cite{FKT14} lets as an open problem whether \FOT is also 
equivalent to aperiodic copyless \sst and to aperiodic $k$-bounded \sst, for every positive integer $k$.
 It is also worth noticing that 
  these characterizations of \FOT, unlike the case of languages, do not allow to decide the class \FOT inside the class \MSOT. Indeed, while decidability for languages relies on the syntactic congruence of the language, no such canonical object exists for the class of regular string transductions.


%
%

In this work, we aim at improving our understanding of the relationships between \TWDFT and \sst. 
We first provide an original transformation from $1$-bounded (or copyless) \sst to \TWDFT, and study its complexity.
While the existing construction used \MSO transformations as an intermediate formalism, resulting in a non-elementary
complexity, our construction is in double exponential time, and in single exponential time
if the input \sst is copyless.
 Conversely, we describe a direct construction from \TWDFT to copyless \sst, which is similar 
 to that of~\cite{AC10}, but avoids the use of an intermediate model.
  These constructions also allow to establish links between the crossing degree of a \TWDFT, and 
the number of variables of an equivalent copyless (resp. $1$-bounded) \sst, and conversely. 
Last, we provide a direct construction from $k$-bounded \sst to $1$-bounded \sst, while the existing one was using copyless \sst as a target model and not $1$-bounded \sst~\cite{AFT12}. 
These constructions are represented by thick dashed arrows on Figure~\ref{fig:overview}.


In order to lift these constructions to aperiodic transducers, we introduce a new transition monoid for \sst,
which is intuitively more precise than the existing one (more formally, the existing one divides the one we introduce).
We use this new monoid to prove that the three constructions we have considered above 
preserve the aperiodicity of the transducer. As a corollary,  this implies that \FOT
is equivalent to both aperiodic copyless and $k$-bounded \sst, for every integer $k$, two results 
that were
stated as conjectures in~\cite{FKT14} (see Figure~\ref{fig:overview}).

\section{Definitions}

\subsection{Words, Languages and Transducers}
Given a finite alphabet $A$, we denote by
$A^*$ the set of finite words over $A$, and by
$\epsilon$ the empty word.
The length of a word $u\in A^*$ is its number of
symbols, denoted by $|u|$. For all $i\in\{1,\dots,|u|\}$, we
denote by $u[i]$ the $i$-th letter of $u$.

A \emph{language} over $ A$ is a set $L\subseteq  A^*$. Given two alphabets $A$ and $B$, 
a \emph{transduction} from $ A$ to $B$ is a relation 
$R \subseteq  A^*\times  B^*$. 
 A transduction $R$ is
\emph{functional} if it is a function.
The transducers we will introduce will define transductions. We will say that
two transducers $T,T'$ are equivalent whenever they define the same transduction.

\noindent\textbf{Automata}
A \emph{deterministic two-way finite state automaton} (\TWDFA) over a finite alphabet
$A$ is a tuple $\cA = (Q, q_0, F, \delta)$  where $Q$ is a finite set of states, $q_0\in Q$ is 
the initial state, $F\subseteq Q$ is a set of final states, and $\delta$ is the
transition function, of type $\delta: Q\times ( A\uplus\{\vdash,\dashv\})\to  Q\times \{+1,0,-1\}$. 
The new symbols $\vdash$ and $\dashv$ are called \emph{endmarkers}.

An input word $u$ is given enriched by the endmarkers, meaning that $\cA$ reads the input $\vdash u\dashv$. We set $u[0]= \vdash$ and $u[|u|+1]=\dashv$.
Initially the head of $\cA$ is on the first cell $\vdash$ in state $q_0$ (the cell at position $0$). 
When $\cA$ reads an input symbol, depending on the
transitions in $\Delta$, its head moves to the left ($-1$), 
or stays at the same position ($0$),
or moves to the right ($+1$).
To ensure the fact that the reading of $\cA$ does not go out of bounds,
we assume that there is no transition moving to the left (resp. to the right) 
on input symbol $\vdash$ (resp. $\dashv$).
 $\cA$ stops as soon as it reaches the endmarker $\dashv$ in a final state.

A \emph{configuration} of $\cA$ is a pair $(q,i)\in Q\times \mathbb{N}$ where $q$ is a state and $i$ is a position on the input tape.
A \emph{run} $\rho$ of $\cA$ is a finite sequence of configurations. 
The run $\rho = (p_1,i_1)\dots (p_m,i_m)$ is a run 
on an input word $u\in A^*$ of length $n$ if $i_m\leq n+1$, and for all $k\in \{1,\dots,m-1\}$, 
$0\leq i_k\leq n+1$ and $(p_k,u[i_k], p_{k+1}, i_{k+1}-i_k)\in \Delta$. It is \emph{accepting} if $p_1 = q_0$, $i_1 = 0$, and $m$ is the only index where both $i_m = n+1$ and $p_m \in F$.
The language of a \TWDFA $\cA$, denoted by $L(\cA)$, is the set of words $u$ such that there exists an accepting run
of $\cA$ on $u$.


\noindent\textbf{Transducers}
\emph{Deterministic two-way finite state transducers} (\TWDFT) over $ A$ extend \TWDFA with a one-way left-to-right output tape.
They are defined as \TWDFA except that the transition relation $\delta$ is extended with outputs:
$\delta: Q\times ( A\uplus\{\vdash,\dashv\}) \to B^* \times Q \times \{-1,0,+1\}$. When 
a transition $(q,a,v,q',m)$ is fired, 
the word $v$ is appended to the right of the output tape.

A run of a \TWDFT is a run of its underlying automaton, i.e. the \TWDFA obtained by ignoring the output 
(called its \emph{underlying input automaton}).
A run $\rho$ may be simultaneously a run on a word $u$ and on a word
$u'\neq u$. However, when the input word is given, there is
a unique sequence of transitions associated with $\rho$. Given a
\TWDFT $T$, an input word $u\in A^*$ and a run $\rho =
(p_1,i_1)\dots (p_m,i_m)$ of $T$ on $u$, the output of $\rho$ on $u$ 
is the word obtained by concatenating the outputs of
the transitions followed by $\rho$. If $\rho$ contains a single configuration, 
this output is simply $\epsilon$.
The transduction defined by $T$ is the relation $R(T)$
defined as the set of pairs $(u,v)\in A^*\times B^*$ 
such that $v$ is the output of an accepting run $\rho$
on the word $u$. As $T$ is deterministic, such a run 
is unique, thus $R(T)$ is a function.

\begin{figure}[t!]
\begin{center}
 \begin{minipage}[c]{0.3\linewidth}
	  \begin{tikzpicture}[scale=0.70, initial text=,inner sep=0pt]
  \node[state,initial,accepting where=left, initial where=left] (q) at (0,0) {$1$};
  \node[state] (r) at (3,0) {$2$};
  \node[state, accepting,accepting where=left] (s) at (6,0) {$3$};
  \path[->] (q) edge [loop above]   node   {\small{$a|a,+1$}}();
  \node () at (-0.04,1.84) { \small{${\vdash}|\epsilon,+1$}};
  \path[->] (q) edge [above] node {\small{$b|\epsilon,-1$}}(r);
  \node () at (1.5,0.75) { \small{${\dashv}|\epsilon,-1$}};
  \path[->] (r) edge [loop above]   node   {\small{$a|b,-1$}}();
  \path[->] (r) edge [above] node {\small{$b|\epsilon,+1$}} (s);
  \node () at (4.5,0.75) {\small{${\vdash}|\epsilon,+1$}};
  \path[->] (s) edge [loop above] node {\small{$a|\epsilon,+1$}} ();
  \path[->] (s) edge [bend left=30,below] node {\small{$b|\epsilon,+1$}} (q);
		  \end{tikzpicture} 
 	 \end{minipage}\hfill
 \begin{minipage}[l]{0.3\linewidth}
 \begin{tikzpicture}[initial text=,scale=0.7]
 
 \node[state,initial above] (state) at (0,0) {};
  \path[->,align=center] (state) edge [loop left] node {
  \begin{tabular}{c|c}
\multirow{2}{6pt}{$a$} & $X=Xa$ \\
						& $Y=Yb$
  \end{tabular}
	 		} ();
  \path[->] (state) edge [loop right] node {
    \begin{tabular}{c|l}
\multirow{2}{6pt}{$b$} & $X=XY$ \\
						& $Y=\epsilon$
  \end{tabular}
  } ();
  \path[->] (state) edge [right] node {\small $XY$} (0,-1.3);
 
 \end{tikzpicture}
 \end{minipage}\hfill
  	 \begin{minipage}{0.1\linewidth}
 	 \end{minipage}\hfill
\end{center}
\caption{Aperiodic \TWDFT (left) and \sst (right) realizing the function $f$.}\label{Fig:example}
\end{figure}
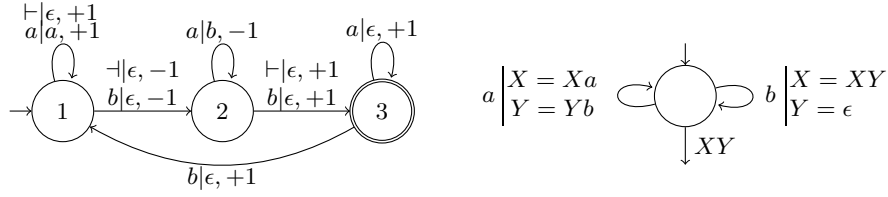

\noindent\textbf{Streaming String Transducers}
Let $\Vars$ be a finite set of variables denoted by $X,Y,\dots$ and $B$ be a finite alphabet. 
A substitution $\sigma$ is defined as a mapping ${\sigma : \Vars \to (B \cup
  \Vars)^*}$. Let $\mathcal{S}_{\Vars, B}$ be the set of all substitutions.
Any substitution $\sigma$ can be extended to $\hat{\sigma}: (B \cup \Vars)^*
\to (B \cup \Vars)^*$ in a straightforward manner.
The composition $\sigma_1 \sigma_2$  of two substitutions $\sigma_1$ and $\sigma_2$
is defined as the standard function composition  $\hat{\sigma_1} \sigma_2$,
i.e. $\hat{\sigma_1}\sigma_2(X) = \hat{\sigma_1}(\sigma_2(X))$ for all $X \in \Vars$. 
We say that a string $u \in (B \cup \Vars)^*$ is \emph{$k$-linear} if each 
$X \in \Vars$ occurs at most $k$ times in $u$. 
A substitution $\sigma$ is $k$-linear if $\sigma(X)$
is \emph{$k$-linear} for all $X$. 
It is \emph{copyless} if for any variable $X$, there exists at most one variable $Y$ such that $X$ occurs in $\sigma(Y)$, and $X$ occurs at most once in $\sigma(Y)$.

  A \emph{streaming string transducer} (\sst) is a tuple 
  $T = ( A, B, Q, q_0, Q_f, \delta, \Vars, \rho, F)$ where 
  $(Q,q_0,Q_f,\delta)$ is a one-way automaton, 
  $A$ and $B$ are finite sets of input and output alphabets respectively,
  $\Vars$ is a finite set of variables,
  $\rho : \delta \to \mathcal{S}_{\Vars, B}$ is a variable update and
  $F: Q_f \rightharpoonup (\Vars\cup B)^*$ is the output function.


\begin{example}\label{example}
As an example, let $f:\{a,b\}^*\to\{a,b\}^*$ be the function mapping any word $u = a^{k_0}ba^{k_1}\cdots ba^{k_n}$ to the word $f(u) =
  a^{k_0}b^{k_0}a^{k_1}b^{k_1} \cdots a^{k_n}b^{k_n}$ obtained by adding
  after each block of consecutive~$a$ a block of consecutive~$b$ of the
  same length.  Since each word $u$ over~$A$ can be uniquely written $u =
  a^{k_0}ba^{k_1} \cdots ba^{k_n}$ with some $k_i$ being possibly equal to $0$, the function~$f$ is well defined.
  We give in Figure~\ref{Fig:example} a \TWDFT and an \sst that realize $f$.
\end{example}

The concept of a run of an \sst is defined in an analogous manner to that of
a finite state automaton.
The sequence $\seq{\sigma_{r, i}}_{0 \leq i \leq |r|}$ of substitutions induced
by a run $r = q_0 \xrightarrow{a_1} q_1 \xrightarrow{a_2} q_2 \ldots q_{n-1}\xrightarrow{a_n} q_n$ is defined
inductively as the following: $\sigma_{r, i} {=} \sigma_{r, i{-}1} \rho(q_{i-1}, a_{i})$
for $1 < i \leq |r|$ and $\sigma_{r,1} = \rho(q_0,a_1)$. We denote $\sigma_{r,|r|}$ by $\sigma_r$
and say that $\sigma_r$ is induced by $r$.

If $r$ is accepting, i.e. $q_n\in Q_f$, we can extend the output function $F$ to $r$ by 
$F(r) = \sigma_\epsilon\sigma_{r}F(q_n)$, where $\sigma_\epsilon$
substitutes all variables by their initial value $\epsilon$. For all
words $u\in A^*$, the output of $u$ by $T$ is defined only if 
there exists an accepting run $r$ of $T$ on $u$, and in that case the output
is denoted by $T(u) = F(r)$. 
The transformation $R(T)$ is then defined as the set of
pairs $(u,T(u))\in A^*\times B^*$.

An \sst $T$ is copyless if for every transition $t\in\delta$, the variable
update $\rho(t)$ is copyless. Given an integer $k\in \mathbb{N}_{>0}$,
we say that $T$ is \emph{$k$-bounded} if all its runs induce 
$k$-linear substitutions. It is \emph{bounded} if it is $k$-bounded for some 
$k$.


The following theorem gives the expressiveness equivalence of the models we consider.
We do not give the definitions of \MSO graph transductions as our results will
only involve state-based transducers (see~\cite{Filiot-ICLA15} for more details).

\begin{theorem}[\cite{EH01,AC10,AFT12}]\label{Thm:Equiv}
Let $f:A^*\to B^*$ be a function over words.
Then the following conditions are equivalent:
\begin{itemize}
\item $f$ is realized by an $\MSO$ graph transduction,
\item $f$ is realized by a \TWDFT,
\item $f$ is realized by a copyless \sst,
\item $f$ is realized by a bounded \sst.
\end{itemize}
\end{theorem}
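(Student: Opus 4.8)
The plan is to establish the theorem as a cycle of four implications, chaining the models together so that each is shown to be at least as expressive as the next, closing the loop back to the start. The most convenient cycle here is:
$$\MSOT \Rightarrow \TWDFT \Rightarrow \text{copyless } \sst \Rightarrow \text{bounded } \sst \Rightarrow \MSOT.$$
Three of these four arrows are essentially the classical constructions referenced in the statement (Engelfriet–Hoogeboom, Alur–Černý, Alur–Filiot–Trivedi), and since this theorem is explicitly attributed to \cite{EH01,AC10,AFT12}, my job is to explain how those pieces assemble rather than to reprove each one from scratch.

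\textbf{From \MSOT{} to \TWDFT.} First I would invoke the Engelfriet–Hoogeboom result~\cite{EH01}, which states that any \MSO{} graph transduction interpreted on strings is realized by a deterministic two-way transducer. The intuition is that the \MSO{} formulas defining the output structure specify, for each output position, which input position and which ``copy'' it comes from, together with the successor relation; since the number of copies is bounded, a two-way machine can enumerate the output positions in order by repeatedly scanning the input and evaluating the relevant \MSO{} formulas via the standard automaton-theoretic translation of \MSO{} on words. The determinism follows because the transduction is functional and the output order is fixed.

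\textbf{From \TWDFT{} to copyless \sst.} Next I would use the Alur–Černý construction~\cite{AC10}: a one-way \sst{} simulates the two-way machine by maintaining, for each pair of states, a variable storing the output the two-way head would produce during a left-to-right (or right-to-left) traversal between those states over the prefix read so far. As the \sst{} reads one more input letter, it updates these variables according to how the two-way transitions stitch the traversals together. The key point for the \emph{copyless} restriction is that each such ``crossing'' output segment is used exactly once when composing the new traversals, so no variable content is duplicated; this is exactly the bookkeeping that makes the update copyless. The inclusion of copyless into bounded \sst{} is then immediate, since a copyless substitution is in particular $1$-linear, hence $1$-bounded, hence bounded.

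\textbf{From bounded \sst{} to \MSOT{}, and the main obstacle.} The remaining and, I expect, most delicate arrow is realizing a bounded \sst{} by an \MSO{} transduction~\cite{AFT12}. The difficulty is that in a $k$-bounded (as opposed to copyless) \sst, a variable's content may be \emph{copied} several times, so one cannot naively identify each output position with a single production event and recover the output successor relation by a fixed \MSO{} formula. The plan to overcome this is to track, via the transition monoid / a reachability analysis on variable flow, how the final output string is assembled from variable contents, and to encode each output position as an input position tagged with bounded extra information (which variable, which occurrence, which copy) so that the global linear order on the output is \MSO{}-definable. Boundedness is precisely what guarantees this extra information stays finite, keeping the construction inside \MSOT. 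Having closed the cycle, expressive equivalence of all four models follows.
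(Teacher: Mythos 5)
The paper does not prove this theorem at all---it is stated as background and attributed to \cite{EH01,AC10,AFT12}, precisely the constructions you invoke, so your proposal takes essentially the same (citational) route: chaining \MSOT{} $\Rightarrow$ \TWDFT{} via \cite{EH01}, \TWDFT{} $\Rightarrow$ copyless \sst{} via \cite{AC10}, the inclusion of copyless into bounded \sst{}, and bounded \sst{} $\Rightarrow$ \MSOT{} via \cite{AFT12}. Your assembly and attributions are correct; the only point worth tightening is that the copyless-to-bounded step should explicitly invoke closure of copyless substitutions under composition, since boundedness is defined on run-induced (composed) substitutions rather than on single transitions.
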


\subsection{Transition monoid of transducers}

A (finite) monoid $M$ is a (finite) set equipped with an associative internal law $\cdot_M$ having a neutral element for this law.
A morphism $\eta: M\to N$ between monoids is an application from $M$ to $N$ that preserves the internal laws, meaning that for all $x$ and $y$ in $M$, $\eta(x\cdot_M y)=\eta(x)\cdot_N\eta(y)$.
When the context is clear, we will write $xy$ instead of $x\cdot_M y$.
A monoid $M$ divides a monoid $N$ if there exists an onto morphism  from a submonoid of $N$ to $M$.
A monoid $M$ is said to be
 \emph{aperiodic} if there exists a least integer $n$, called the \emph{aperiodicity index} of $M$, such that for all elements $x$ of $M$, we have $x^n=x^{n+1}$. 

Given an alphabet $A$, the set of words $A^*$ is a monoid equipped with the concatenation law, having the empty word as neutral element. It is called the \emph{free monoid} on $A$.
A finite monoid $M$ \emph{recognizes} a language $L$ of $A^*$ if there exists an onto morphism $\eta:A^*\to M$ such that $L=\eta^{-1}(\eta(L))$.
It is well-known that the languages recognized by finite monoids are exactly the regular languages.

The monoid we construct from a machine is called its \emph{transition monoid}.
We are interested here in aperiodic machines, in the sense that a machine is aperiodic if its transition monoid is aperiodic.
 We now give the definition of the transition monoid for a \TWDFT and an \sst.

\begin{wrapfigure}[6]{R}{0.2\textwidth}
  \begin{tikzpicture}[scale=0.8]
  \draw[|-|] (0,1) -- (2.3,1);
  \node (u) at (1,1.2) {$u$};
  \node (q) at (0,0.6) {$p$};
  \node (p) at (-0.4,0) {$q$};
  \draw plot [smooth] coordinates { (0.1,0.6) (2,0.6) (1,0.4) (1.4,0.2) (0,0)};
  \draw[->] (0,0) -- (p);
  \end{tikzpicture}
  \label{fig:ll-path}
\end{wrapfigure}
\noindent\textbf{Deterministic Two-Way Finite State Transducers} As in the case of automata, the transition monoid of a \TWDFT $T$ is the set of all possible behaviors of $T$ on a word. 
The following definition comes from~\cite{CD15}, using ideas from~\cite{Shepherdson59} amongst others.
As a word can be read in both ways, the possible runs are split
into four relations over the set of states $Q$ of $T$.
Given an input word $u$, we define the left-to-left behavior $\bh_{\ell\ell}(u)$ as the set of pairs $(p,q)$ of states 
of $T$ such that there exists a run over $u$ starting on the first letter of $u$ in state $p$ and exiting $u$ on the left in state $q$ (see Figure on the right).
We define in an analogous fashion the left-to-right, right-to-left and right-to-right behaviors denoted 
respectively $\bh_{\ell r}(u)$, $\bh_{r\ell}(u)$ and $\bh_{rr}(u)$.
Then the transition monoid of a \TWDFT is defined as follows:

  Let $T=(Q,A,\delta,q_0,F)$ be a \TWDFT.  The
  \emph{transition monoid} of~$T$ is $A^*/\!\!\sim_{T}$ where
  $\sim_T$ is the conjunction of the four relations $\LL$,
  $\LR$, $\RL$ and $\RR$ defined for any words $u$, $u'$ of $A^*$ as follows:
  $u \sim_{xy} u'$ iff $\bh_{xy}(u)=\bh_{xy}(u')$, for $x,y\in\{\ell,r\}$.
  The neutral element of this monoid is the class of the empty
  word~$\epsilon$, whose behaviors $bh_{xy}(\epsilon)$ is the identity
  function if $x\neq y$, and is the empty relation otherwise.

Note that since the set of states of $T$ is finite, each behavior relation is of finite 
index and consequently the transition monoid of $T$ is also finite.
Let us also remark that the transition monoid of $T$ does not depend on the output 
and is in fact the transition monoid of the underlying \TWDFA.

\noindent\textbf{Streaming String Transducers} A notion of transition monoid for \sst was defined in~\cite{FKT14}. We give here its formal definition and refer to~\cite{FKT14} for advanced considerations.
In order to describe the behaviors of an \sst, this monoid describes the possible flows of variables
along a run.
Since we give later an alternative definition of transition monoid for \sst, we will call 
it the \emph{flow transition monoid} (\ftm).

Let $T$ be an \sst with states $Q$ and variables $\cX$.
The \emph{flow transition monoid} $M_T$ of $T$ is a set of square matrices over the integers enriched 
with a new absorbent element $\bot$. 
The matrices are indexed by elements of $Q\times\cX$.
Given an input word $u$, the image of $u$ in $M_T$ is the matrix $m$ such that for all states $p,q$ and all variables $X,Y$, $m[p,X][q,Y]=n\in \mathbb{N}$ (resp. $m[p,X][q,Y]=\bot$) if, and only if, 
there exists a run $r$ of $T$ over $u$ from state $p$ to state $q$, and $X$ occurs $n$ times in $\sigma_r(Y)$
(resp. iff there is no run of $T$ over $u$ from state $p$ to state $q$). 

Note that if $T$ is $k$-bounded, then for all word $w$, all the coefficients of its image in $M_T$ are bounded by $k$. The converse also holds.
Then $M_T$ is finite if, and only if, $T$ is $k$-bounded, for some $k$.

It can be checked that the machines given in Example~\ref{example} are aperiodic.
Theorem~\ref{Thm:Equiv} extends to aperiodic subclasses and to first-order logic, as in the case of regular languages~\cite{Schutzenberger65,MP71}.
These results as well as our contributions to these models are summed up in Figure~\ref{fig:overview}.

\begin{theorem}[\cite{FKT14,CD15}]
Let $f:A^*\to B^*$ be a function over words.
Then the following conditions are equivalent:
\begin{itemize}
\item $f$ is realized by a $\FO$ graph transduction,
\item $f$ is realized by an aperiodic \TWDFT,
\item $f$ is realized by an aperiodic $1$-bounded \sst.
\end{itemize}
\end{theorem}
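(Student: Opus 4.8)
The plan is to route everything through the common notion of FO graph transduction, establishing the two equivalences $\FOT \Leftrightarrow$ aperiodic $1$-bounded \sst and $\FOT \Leftrightarrow$ aperiodic \TWDFT independently, and then combining them by transitivity. Each equivalence splits into a synthesis direction, from logic to a transducer, and an analysis direction, from a transducer back to logic; in every case the crucial point beyond the corresponding statement of Theorem~\ref{Thm:Equiv} is the exact matching between aperiodicity of the transition monoid and first-order definability of the relevant predicates.

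First I would handle the synthesis directions, $\FOT \to$ transducer. These refine the constructions underlying Theorem~\ref{Thm:Equiv}. An FO graph transduction outputs a bounded number of copies of the input, with edges and labels defined by FO formulas over the ordered input word, so the control state of the transducer (resp. the variable flow of the \sst) needs only to remember the FO-type of bounded quantifier rank of the prefix and suffix read so far. Since, by the Sch\"utzenberger--McNaughton--Papert theorem, FO-definable languages are exactly those recognised by finite aperiodic monoids, the monoid of these types is aperiodic, and this aperiodicity transfers to the transition monoid of the resulting transducer.

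The analysis directions, transducer $\to \FOT$, are the heart of the argument. For an aperiodic \TWDFT one encodes the unique accepting run by its crossing sequences: the output is the concatenation, in run order, of the letters emitted at each visit of each position. The predicate ``at position $i$ the run is in state $q$ on its $j$-th visit'' is determined by the behaviour relations $\bh_{\ell\ell}, \bh_{\ell r}, \bh_{r\ell}, \bh_{rr}$ of the surrounding factors; because the transition monoid is aperiodic, these relations are recognised by an aperiodic monoid and hence FO-definable. One then writes FO formulas producing the output copies and, most delicately, an FO formula defining the linear order in which output positions are produced along the run. For an aperiodic $1$-bounded \sst the analysis is organised around variable flow rather than crossing sequences: one tracks through the \ftm which input position feeds which output letter and in what order the variable contents are concatenated, and aperiodicity of the \ftm makes all these flow predicates FO-definable.

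The main obstacle I anticipate is precisely this point in the analysis directions: turning the combinatorial order in which output symbols are produced---the run order of a two-way machine, or the nested concatenation order of \sst variables---into an FO-definable linear order on output positions. This is where $1$-boundedness is indispensable---it guarantees that each input position contributes a bounded, FO-addressable amount of output---and where the passage from ``aperiodic transition monoid'' to ``FO-definable'' must be carried out uniformly, via Sch\"utzenberger's theorem, for every behaviour and flow predicate simultaneously.
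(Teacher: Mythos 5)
This theorem is not proved in the paper at all: it is quoted as a known result from \cite{FKT14,CD15}, which establish exactly the two equivalences you describe---\FOT{} versus aperiodic $1$-bounded \sst{} in \cite{FKT14}, and \FOT{} versus aperiodic \TWDFT{} in \cite{CD15}---by back-and-forth constructions in which aperiodicity and FO-definability are matched via Sch\"utzenberger's theorem applied to prefix/suffix types, to the behaviour relations $\bh_{\ell\ell},\bh_{\ell r},\bh_{r\ell},\bh_{rr}$, and to variable flow, just as in your outline. So your plan follows the same route as the cited works and is sound; the only caveat is that it remains a sketch, since the genuinely hard steps you correctly single out (in particular the FO-definability of the order in which output positions are produced) are named but not carried out here, whereas they occupy the bulk of those two papers.
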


\section{Substitution Transition Monoid}\label{Section:Order}

In this section, we give an alternative take on the definition of the transition monoid of an \sst, and show that both notions coincide on aperiodicity and boundedness.
The intuition for this monoid, that we call the \emph{substitution transition monoid}, is for the elements to take into account not only the multiplicity of the output of each variable in a given run, but also the order in which they appear 
in the output.
It can be seen as an enrichment of the classic view of transition monoids as the set of functions over states equipped with the law of composition.
Given a substitution $\sigma \in \mathcal{S}_{\Vars, B}$, let us denote 
$\tsig$ the projection of $\sigma$ on the set $\cX$, 
i.e. we forget the parts from $B$.
The substitutions $\tsig$ 
are homomorphisms of $\Vars^*$
which form an (infinite) monoid. 
Note that in the case of a
1-bounded \sst, each variable occurs at most once in $\tsig(Y)$.
 

\paragraph{Substitution Transition Monoid of an \sst.}
Let $T$ be an \sst with states $Q$ and variables $\cX$.
The \emph{substitution transition monoid} (\stm) of $T$, denoted $M^\sigma_T$, is a set of partial functions 
$f:Q \rightharpoonup Q\times \mathcal{S}_{\Vars,\emptyset}$. 
%
Given an input word $u$, the image of $u$ in $M^\sigma_T$ is the
 function $f_u$ such that for all states $p$, $f_u(p)=(q,\tsig_r)$ if, and only if,
 there exists a run $r$ of $T$ over $u$ from state $p$ to state $q$ that induces
 the substitution $\tsig_r$.
This set forms a monoid when equipped with the following composition law:
Given two functions $f_u,f_v \in M_T^\sigma$, the function $f_{uv}$
is defined by $f_{uv}(q)=(q'',\tsig\circ\tsig')$ whenever 
$f_u(q)=(q',\tsig)$ and $f_v(q')=(q'',\tsig')$.

We now make a few remarks about this monoid.
Let us first observe that the \ftm of $T$ can  be recovered from its \stm.
Indeed, the matrix $m$ associated with a word $u$ in $M_T$  is easily deduced
from the function $f_u$ in $M_T^\sigma$.
This observation induces an onto morphism from $M_T^\sigma$
to $M_T$, and consequently the $\ftm$ of an \sst divides its $\stm$. This proves that if
the \stm is aperiodic, then so is the \ftm since aperiodicity is preserved by division of monoids.
Similarly, copyless and $k$-bounded \sst (given $k\in\mathbb{N}_{>0}$) are characterized by means of their \stm.
This transition monoid can be separated into two main components:
the first one being the transition monoid of the underlying deterministic one-way automaton, 
which can be seen as a set of functions $Q\to Q$,
while the second one is the monoid $\calS_\cX$ of homomorphisms on $\cX$,
equipped with the composition.
The aware reader could notice that the \stm can be written as the wreath product of the transformation semigroup
 $(\cX^*,\calS_\cX)$ by $(Q,Q^Q)$.
However, as the monoid of substitution is obtained through the closure under composition of 
the homomorphisms of a given \sst, 
it may be infinite.


The next theorem proves that aperiodicity for both notions coincide, since the converse comes from the division of $\stm$ by $\ftm$.
\begin{restatable}{theorem}{PropOrder}\label{Prop:Order}
Let $T$ be a $k$-bounded SST with $\ell$ variables.
If its \ftm is aperiodic with aperiodicity index $n$ then 
its \stm is aperiodic with aperiodicity index at most $n+ (k+1)\ell$.
\end{restatable}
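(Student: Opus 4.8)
The plan is to bound the aperiodicity index of the \stm directly: I fix an arbitrary word $w$ and show that $f_w^N = f_w^{N+1}$ in $M^\sigma_T$ for $N = n + (k+1)\ell$. Since every element of $M^\sigma_T$ is of the form $f_w$, this yields the claimed index. First I separate the state component from the substitution component. The transition monoid of the underlying one-way automaton is a morphic image of the \ftm (the $\bot$/non-$\bot$ pattern of a matrix records exactly the reachable states), so it divides the \ftm and is therefore aperiodic of index at most $n$. Consequently, for any state $p$ and any $N \ge n$, reading $w^N$ and $w^{N+1}$ from $p$ reach the same state $q = g_w^N(p)$, which is moreover a fixed point of the $w$-transition $g_w$. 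This already equalizes the domains and the state components of $f_w^N$ and $f_w^{N+1}$.

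It remains to equalize the substitution components. Let $\tsig \in \calS_{\cX}$ be the ($k$-linear) projected substitution induced by reading $w$ once from $q$ to $q$. Decomposing the run on $w^N$ from $p$ into the first $n$ transient copies followed by $N-n$ loops at $q$, the substitution component of $f_w^N(p)$ is $\Psi \circ \tsig^{\,N-n}$, where $\Psi$ collects the transient copies and is independent of $N$. Hence $f_w^N(p) = f_w^{N+1}(p)$ reduces to $\tsig^{(k+1)\ell} = \tsig^{(k+1)\ell + 1}$, i.e. to the statement that the cyclic submonoid $\langle \tsig\rangle$ of $\calS_{\cX}$ is aperiodic with index at most $(k+1)\ell$. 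Note that all powers $\tsig^i$ are $k$-linear (each is induced by the run on $w^i$), so $\langle\tsig\rangle$ is finite.

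The structural core comes next. I would first use the aperiodicity of the \ftm to control the variable-flow graph of $\tsig$ (put an edge $Y \to X$ whenever $X$ occurs in $\tsig(Y)$). The multiplicity matrix $M$ of $\tsig$ has entries $|\tsig(Y)|_X \le k$, all its powers are $k$-bounded, and $M^n = M^{n+1}$. Restricting $M$ to a strongly connected class of the graph yields an irreducible nonnegative integer matrix with bounded powers satisfying the same identity; irreducibility together with boundedness forces such a class to be a single vertex carrying a self-loop of weight one. Thus every recurrent variable $X$ satisfies $\tsig(X) = \alpha_X X \beta_X$ with a single occurrence of $X$, and a counting argument using $k$-linearity shows that $\alpha_X$ and $\beta_X$ are totally mortal: no iterate $\tsig^j(\alpha_X)$ (resp. $\tsig^j(\beta_X)$) ever contains a recurrent variable, so these decorations vanish after at most $\ell$ steps.

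Finally I would run a potential argument to obtain the index $(k+1)\ell$. For a fixed variable $Y$ let $\phi_Y(i)$ be the number of recurrent-variable occurrences in $\tsig^i(Y)$. Since recurrent letters reproduce exactly themselves up to mortal decorations, while transient letters can only create new recurrent letters, $\phi_Y$ is non-decreasing; being bounded by $k\ell$ it is constant from some index $i_0 \le k\ell$. Past $i_0$ every transient letter that occurs must be mortal (otherwise $\phi_Y$ would keep growing), so the recurrent skeleton and its left-to-right order are frozen, and each mortal block between two consecutive recurrent letters converges within a further $\ell$ steps by the vanishing property above. Hence $\tsig^i(Y)$ is constant for $i \ge i_0 + \ell \le (k+1)\ell$, as required. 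The hard part is precisely this last passage from multiplicities to order: the \ftm only records Parikh images, so two anagrams are indistinguishable to it, and the whole difficulty is to convert the \emph{numerical} aperiodicity of $M$ into the statement that the recurrent letters cannot be permuted along the iteration — this is exactly what the self-loop structure buys us, while $k$-linearity is used both to bound the potential and to kill the transient decorations.
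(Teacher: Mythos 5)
Your architecture mirrors the paper's own proof quite closely: the same reduction to loop substitutions (contributing the additive $n$ via aperiodicity of the state part), the same exclusion of flow cycles through distinct variables by a multiplicity blow-up contradicting $k$-boundedness (your strongly-connected-class analysis is exactly the paper's lemma on the flow relation $\lessdot$, in matrix clothing), and a stabilization analysis of the iterates $\tsig^i$, organized by a recurrent/mortal decomposition where the paper instead inducts over the flow order. One small repair first: ``irreducibility together with boundedness'' does not force a class to be a single vertex with a self-loop of weight one --- the swap $\tsig(X)=Y$, $\tsig(Y)=X$ is irreducible with bounded powers; you must also invoke the identity $M^n=M^{n+1}$, which you listed among the hypotheses but omitted from the justification.

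The genuine gap is the quantitative step of your potential argument. The inference ``$\phi_Y$ is non-decreasing and bounded by $k\ell$, hence constant from some index $i_0\leq k\ell$'' is false as a general principle: a monotone integer sequence bounded by $B$ can stay constant arbitrarily long and then jump, so its range does not bound its stabilization time. Such plateaus actually occur in your setting: for a chain $Y\to Z_1\to\cdots\to Z_m\to W$ with the $Z_j$ transient and $W$ recurrent, $\phi_Y$ equals $0$ for $m$ steps and jumps only at step $m+1$. Since your final bound $i_0+\ell\leq(k+1)\ell$ rests on $i_0\leq k\ell$, the claimed index is not established; what survives is only eventual stabilization (hence aperiodicity with \emph{some} finite index, the \stm of a $k$-bounded \sst being finite). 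The repair uses structure you already isolated: $\phi_Y$ can increase at step $i$ only if $\tsig^{i-1}(Y)$ contains a transient letter that is not totally mortal; such letters are produced only by letters of the same kind (recurrent letters produce themselves plus totally mortal decorations, and totally mortal letters produce only totally mortal letters), and the flow graph restricted to them is acyclic, so they disappear from $\tsig^i(Y)$ after at most $\ell$ iterations. This gives $i_0\leq\ell\leq k\ell$ (and $\phi_Y\equiv 0$ when no recurrent variable is ever reached), after which your freezing-plus-block-convergence argument goes through and in fact yields a slightly better bound than $(k+1)\ell$.
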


\section{From $1$-bounded \sst to \TWDFT}

The existing transformation of a $1$-bounded (or even copyless) \sst into
an equivalent \TWDFT goes through \MSO transductions, yielding a non-elementary
complexity. We present here an original construction whose complexity is elementary.

\begin{restatable}{theorem}{SSTtoTw}\label{SSTto2w}
Let $T$ be a 1-bounded SST with $n$ states and $m$ variables.
Then we can effectively construct a deterministic 2-way transducer that realizes the same function.
If $T$ is 1-bounded (resp. copyless), then the \TWDFT has $O(m 2^{m2^m} n^n)$ states
(resp. $O(m n^n)$).
\end{restatable}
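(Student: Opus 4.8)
## Proof Plan for Theorem~\ref{SSTto2w}

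The plan is to simulate a $1$-bounded \sst $T$ by a \TWDFT $\cA$ that, rather than maintaining the variable contents explicitly (which a two-way machine cannot store), recomputes them lazily by traversing the input back and forth. The central idea is that in a $1$-bounded \sst, each variable has at most one occurrence in the final output structure, so the contents of the variables at the accepting state form a collection of strings that, read in the right order, spell out $F(r)$. Since the variable updates only append and rearrange, the value of any variable at any position is a concatenation of output fragments produced at earlier positions, linked by a \emph{flow} relation on $Q\times\cX$. The \TWDFT will output $T(u)$ by performing a depth-first traversal of this flow, using its two-way head to follow variable dependencies leftward and its finite control to remember which variable-occurrence it is currently expanding.

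First I would run the underlying one-way automaton of $T$ forward to locate the accepting state $q_n$ and fix the output skeleton $F(q_n)\in(\cX\cup B)^*$. The key data structure is the flow information: for each position $i$ and each pair of states, the substitution $\tsig$ recording, for each variable $Y$, the sequence of variables feeding into $Y$. This is exactly the information tracked by the \stm of Section~\ref{Section:Order}, and I would use it to guide navigation. Concretely, when $\cA$ needs to emit the content of a variable $X$ as computed up to position $i$, it consults the local update $\rho(q_{i-1},u[i])$: this rewrites $X$ as a word over $B\cup\cX$, so $\cA$ outputs the $B$-letters directly, and for each variable $Y$ appearing, it recursively descends to emit the content of $Y$ at position $i-1$, moving its head left by one. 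Because $T$ is $1$-bounded, each variable is expanded at most once along the whole output, so this recursion visits each (position, variable) pair a bounded number of times and terminates.

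The heart of the construction is encoding the recursion stack of this traversal inside the finite control of a two-way machine, which has no explicit stack. The standard technique, going back to Shepherdson-style simulations, is to recompute the return information on the fly: instead of pushing a stack frame, $\cA$ remembers only the \emph{current} variable-occurrence being expanded, and when it finishes, it re-derives where to resume by re-reading the relevant portion of the input and matching against the precomputed behavior summaries. To make this possible with finitely many states, I would store in the state a variable $X\in\cX$ (the one currently being expanded), a phase bit distinguishing ``descending'' from ``returning,'' and a summary of the substitution behavior on the suffix already processed. This is where the state count $O(m\,2^{m2^m}\,n^n)$ enters: the factor $n^n$ is the transition monoid of the underlying automaton (functions $Q\to Q$), the factor $2^{m2^m}$ accounts for the set of possible flow summaries over the $m$ variables (a substitution $\tsig:\cX\to\cX^*$ is, in the $1$-bounded case, essentially a partial injection plus ordering data, of which there are $2^{O(m2^m)}$ relevant equivalence classes), and the leading factor $m$ records the active variable.

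The main obstacle will be proving correctness and termination of the lazy recomputation: I must show that the two-way head can always reconstruct the ``parent'' variable-occurrence to return to, using only the finitely many summaries stored, and that the total output produced equals $F(r)=\sigma_\epsilon\sigma_r F(q_n)$ in the correct left-to-right order. This requires an invariant stating that whenever $\cA$ is at position $i$ in descending phase with active variable $X$ and summary $s$, the output emitted so far is exactly the prefix of $T(u)$ consisting of all $B$-fragments that lie, in the final output order, before the content of $X$ as evaluated at position $i$. In the copyless case the recursion has no branching overhead in the same sense—each variable feeds at most one variable—so the flow summary collapses to a function $\cX\to\cX$ (of which there are $m^m$, dominated by the automaton's $n^n$), giving the improved bound $O(m\,n^n)$; verifying that this simpler summary suffices, and that $1$-boundedness (as opposed to copylessness) genuinely forces the doubly-exponential $2^{m2^m}$ term rather than a smaller one, is the delicate accounting step I would treat with the most care.
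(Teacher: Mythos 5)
Your core traversal idea is the same as the paper's: walk the output structure of the run with the two-way head, using $1$-boundedness (restricted to \emph{useful} variables, i.e.\ those flowing into the final output) to make the ``which variable does $X$ flow into'' step unique, and observing that copylessness makes this step purely local. However, your construction has a genuine gap exactly at its load-bearing step. You propose a \emph{monolithic} \TWDFT whose state stores ``a summary of the substitution behavior on the suffix already processed'' together with a summary (functions $Q\to Q$) of the prefix behavior of the underlying automaton, and which ``re-derives where to resume by re-reading the relevant portion of the input.'' A deterministic two-way machine cannot maintain such summaries in its finite control: a prefix summary can be updated when the head moves right but not when it moves left, and a suffix summary can be updated when moving left but not when moving right, because monoid elements cannot be cancelled; nor can a \TWDFT mark its current position, walk away to recompute a summary, and then return to where it was. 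This is precisely the difficulty that the paper avoids by \emph{factoring} the construction: a left-to-right sequential transducer annotates each position with the \sst state (prefix information), a right-to-left sequential transducer annotates it with the set of useful variables (suffix information), and only then does an $O(m)$-state walker follow the output structure using purely local lookups; the theorem's bounds $n^n$ and $2^{m2^m}=(2^m)^{2^m}$ arise as the cost of eliminating these two annotation passes via the closure of \TWDFTs under composition with sequential transducers~\cite{CV77,CD15}. Without invoking (or reproving) that closure theorem, your plan asserts rather than establishes the one thing that needs proof.

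Two secondary symptoms of the same gap. First, your copyless accounting does not actually yield $O(mn^n)$: if the walker still carried a suffix flow summary $\cX\to\cX$, the state count would pick up an extra $m^m$ factor, which is not absorbed by $n^n$ in general. The paper gets $O(mn^n)$ because copylessness makes the flow target locally unique, so the right-to-left annotation pass is dropped \emph{entirely}, not merely shrunk. Second, the depth-first-search-with-implicit-stack framing is an overcomplication that then forces you to solve a resumption problem: for a $1$-bounded \sst the useful part of the output structure is a single path, so once the annotations are available the walker needs no ``returning'' phase at all --- from a state $(X,o)$ it moves one cell right and the annotation there determines deterministically whether to continue right or descend left into the next variable.
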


\begin{proof}
We define the \TWDFT as the composition of a left-to-right sequential transducer, a right-to-left sequential transducer and a 2-way transducer.
Remark that this proves the result as two-way transducers are closed under composition with sequential ones~\cite{CV77}.

The left-to-right sequential transducer does a single pass on the input word and outputs the same word enriched with the transition used by the \sst in the previous step.
The right-to-left transducer uses this information to enrich each position of the input word with the set of useful variables, i.e  the variables 
that flow to an output variable according to the partial run on the suffix read.
The two sequential transducers are quite standard. They realize length-preserving functions that simply enrich the input word with new information. 
The last transducer is more interesting: it uses the enriched information to follow the output structure of $T$.
The \emph{output structure} of a run is a labeled and directed graph such that, for each variable $X$ useful at a position $j$, we have two nodes $X_i^j$ and $X_o^j$ linked by a path whose concatenated labels form the value stored in $X$ at position $j$ of the run (see~\cite{FKT14} and Figure~\ref{Fig:OutputStructure}).


\begin{figure}[t!]
\begin{center}
\begin{tikzpicture}[yscale=0.9]
\node[draw,shape=circle,minimum size=0.8cm,inner sep=0pt,outer sep=0pt] (q0) at (0.5,6.5) {$q_{i-1}$};
\node[draw,shape=circle,minimum size=0.8cm,inner sep=0pt,outer sep=0pt] (q1) at (3.5,6.5) {$q_i$};
\path[->,>=latex] (q0) edge [above] node {$a_i$} (q1);

\node[draw,shape=circle,minimum size=0.8cm,inner sep=0pt,outer sep=0pt] (q2) at (6.5,6.5) {$q_{i+1}$};
\path[->,>=latex] (q1) edge [above] node {$a_{i+1}$} (q2);

\node[draw,shape=circle,minimum size=0.8cm,inner sep=0pt,outer sep=0pt] (q3) at (9.5,6.5) {$q_{i+2}$};
\path[->,>=latex] (q2) edge [above] node {$a_{i+2}$} (q3);

\node[align=center,shape=rectangle,draw,rounded corners=1mm] (s1) at (2,5.8) { X:= aa \\  \\ Z:=aZab};
\node[align=center,shape=rectangle,draw,rounded corners=1mm] (s2) at (5,5.8) 
	{X:= aXc \\ Y:= bZ \\  };
\node[align=center,shape=rectangle,draw,rounded corners=1mm] (s3) at (8,5.8) { X:= XaaYb \\  \\ };

\node (xi) at (0,4.5) {$X_i$:};
\draw[dotted] (xi) -- (9,4.5);
\node (x0) at (0,4) {$X_o$:};
\draw[dotted] (x0) -- (9,4);

\node (yi) at (0,3.5) {$Y_i$:};
\draw[dotted] (yi) -- (9,3.5);
\node (y0) at (0,3) {$Y_o$:};
\draw[dotted] (y0) -- (9,3);

\node (zi) at (0,2.5) {$Z_i$:};
\draw[dotted] (zi) -- (9,2.5);
\node (z0) at (0,2) {$Z_o$:};
\draw[dotted] (z0) -- (9,2);

\foreach \x/\y/\n in 
	{2/4.5/1,5/4.5/2,8/4.5/3, 5/4/4, 8/4/5,
	5/3.5/6, 8/3/7,
	2/2.5/8, 2/2/9, 5/2/10}
	\node[minimum size=0cm,inner sep=0pt,outer sep=0pt] (n\n) at (\x,\y) {$\bullet$};
	
\path[->,>=latex] (n3) edge [above] node {$\epsilon$} (n2);
\path[->,>=latex] (n2) edge [above] node {$a$} (n1);
\path[->,>=latex] (n1) edge [left, below] node {$aa$} (n4);
\path[->,>=latex] (n4) edge [above] node {$c$} (n5);
\path[->,>=latex] (n5) edge [below] node {$aa$} (n6);
\path[->,>=latex] (n6) edge [above] node {$b$} (n8);

\path[->,>=latex,dashed] (n8) edge [above] node {$a$} (0.5,2.5);
\path[->,>=latex,dashed] (0.5,2) edge [above] node {} (n9);
\path[->,>=latex] (n9) edge [above] node {$ab$} (n10);
\path[->,>=latex] (n10) edge [above] node {$\epsilon$} (n7);
\path[->,>=latex,dashed] (n7) edge [above] node {$b$} (9.3,3.8);

\end{tikzpicture}
\caption{The output structure of a partial run of an \sst used in the proof of Theorem~\ref{SSTto2w}.}\label{Fig:OutputStructure}
\end{center}
\end{figure}
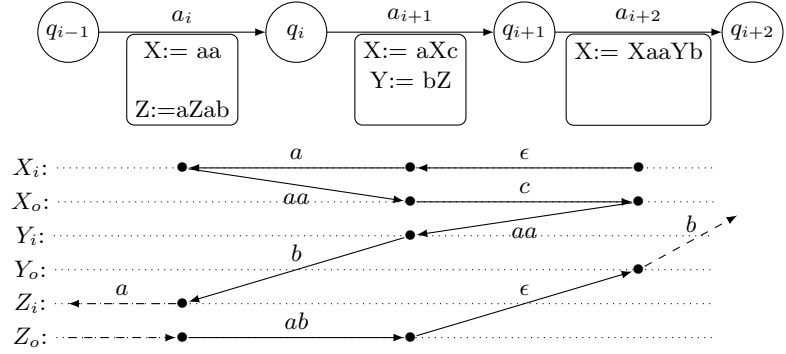
The transition function of the two-way transducer is described in Figure~\ref{Fig:Update}.
It first reaches the end of the word and picks the first variable to output.
It then rewinds the run using the information stored by the first sequential transducer, producing the said variable using the local update function.
When it has finished to compute and produce 
a variable $X$, it switches to the following one using the information of the second transducer to know which variable $Y$ $X$ is flowing to, and starts producing it.
Note that such a $Y$ is unique thanks to the  $1$-boundedness property.
If $T$ is copyless, then this information is local and the second transducer can be bypassed.

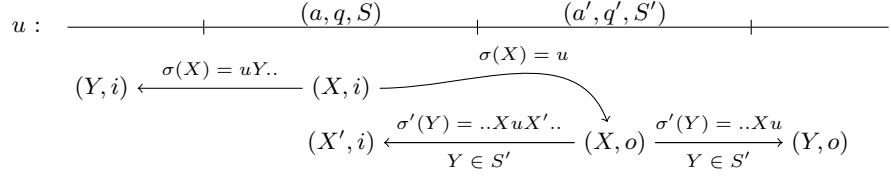
\begin{figure}[t!]
\begin{center}
\begin{tikzpicture}[scale=0.9]
\draw[-|] (0,5.2) -- (2,5.2);
\draw[-|] (2,5.2) -- (6,5.2);
\draw[-|] (6,5.2) -- (10,5.2);
\draw (10,5.2) -- (12,5.2);

\node (u) at (-0.6,5.2) {$u$ :};
\node (qas) at (4,5.4) {$(a,q,S)$};
\node (qas2) at ( 8,5.4) {$(a',q',S')$};

\node (xi) at (4,4.3) {$(X,i)$};
\node (yi) at (0.5,4.3) {$(Y,i)$};
\path[->] (xi) edge [above] node {\scriptsize{$\sigma(X)=uY..$}} (yi);

\node (xo) at (8,3.5) {$(X,o)$};
\path[->] (xi) edge [in=110,out=0,above] node {\ \ \scriptsize{$\sigma(X)=u$}} (xo);

\node (yo) at (11,3.5) {$(Y,o)$};
\path[->] (xo) edge [above] node {\scriptsize{$\sigma'(Y)=..Xu$}} (yo);
\path[] (xo) edge [below] node {\scriptsize{$Y\in S'$}} (yo);

\node (xpi) at (4, 3.5) {$(X',i)$};

\path[->] (xo) edge [above] node {\scriptsize{$\sigma'(Y)=..XuX'..$}} (xpi);
\path[] (xo) edge [below] node {\scriptsize{$Y\in S'$}} (xpi);

\end{tikzpicture}
\end{center}

\caption{The third transducer follows the output structure. 
States indexed by $i$ correspond to the beginning of a variable, while states indexed by $o$ correspond the end.
$\sigma$ (resp. $\sigma'$) stand for the substitution at position $a$ (resp. $a'$).}\label{Fig:Update}
\end{figure}


Regarding complexity, a careful analysis of the composition of a one-way transducer of size 
$n$ with a two-way transducer of size $m$  from~\cite{PhDartois,CD15} shows that this can 
be done by a two-way transducer of size $O(m n^n)$.
Then given a $1$-bounded SST with $n$ states and $m$ variables, we can construct a deterministic two-way transducer of size $O(m 2^{m2^m} n^n)$. 
If $T$ is copyless, the second sequential transducer is omitted, resulting in a size of $O(m n^n)$.
\end{proof}

\begin{restatable}{theorem}{SSTtoTwAp}\label{SSTto2w:Ap}
Let $T$ be an aperiodic 1-bounded SST. Then the equivalent \TWDFT 
constructed using Theorem~\ref{SSTto2w} is also aperiodic.
\end{restatable}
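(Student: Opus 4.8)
The plan is to reduce the statement to a claim about transition monoids: since the resulting machine is a \TWDFT, I must show that its transition monoid $A^*/\!\!\sim_T$ (equivalently, the one of its underlying \TWDFA) is aperiodic, and the strategy is to exhibit it as a morphic image of, or divisor of, a monoid already known to be aperiodic. The starting point is Theorem~\ref{Prop:Order}: because $T$ is $1$-bounded and aperiodic, its \ftm\ $M_T$ is aperiodic, hence taking $k=1$ its \stm\ $M_T^\sigma$ is aperiodic as well. Throughout I will use the three closure facts recalled in the Definitions: aperiodicity is preserved under division, under passage to the opposite monoid (the identity $x^n=x^{n+1}$ is self-dual, so powers of a single element agree in $M$ and $M^{\mathrm{op}}$), and under finite direct products (the index of a product being the maximum of the indices).

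Since the \TWDFT\ is built as the composition of the two sequential transducers $T_1,T_2$ and the two-way transducer $T_3$, I would first handle the three components separately and prove that each is aperiodic. The left-to-right transducer $T_1$ merely annotates every position with the transition fired by $T$, so its transition monoid is exactly that of the underlying one-way automaton of $T$; forgetting the substitution component of $M_T^\sigma$ gives a surjective morphism onto it, hence it is aperiodic. The right-to-left transducer $T_2$ annotates every position with the set of \emph{useful} variables, and whether a variable is useful at a position depends only on the variable flow along the corresponding suffix, i.e. only on the image of that suffix in $M_T^\sigma$; as $T_2$ reads from right to left, its transition monoid divides the opposite monoid $(M_T^\sigma)^{\mathrm{op}}$, which is aperiodic.

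The heart of the argument is the two-way transducer $T_3$, whose states are the pairs $(X,i)$ and $(X,o)$ and whose runs trace the output structure of $T$ (Figure~\ref{Fig:OutputStructure}). Here I would prove the key lemma that each of the four behaviors $\bh_{\ell\ell}$, $\bh_{\ell r}$, $\bh_{r\ell}$, $\bh_{rr}$ of $T_3$ on an enriched factor is a function of the image $f_u\in M_T^\sigma$ of the corresponding input factor $u$: the substitutions induced by $T$ on $u$ determine exactly the arcs of the output structure crossing the block $u$, so two factors with $f_u=f_{u'}$ induce the same crossing behavior in $T_3$, and these behaviors compose as in $M_T^\sigma$. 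This produces a surjective morphism $M_T^\sigma\twoheadrightarrow A^*/\!\!\sim_{T_3}$, whence $T_3$ is aperiodic. The $1$-boundedness is crucial at this point, since it guarantees that each useful variable flows to a \emph{unique} successor, making the traversal of the output structure deterministic and locally reconstructible from the flow data; without it one could not recover the one-sided behaviors from $f_u$ alone.

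Finally I would assemble the three pieces. The composition realizing the \TWDFT\ follows~\cite{CV77,CD15}, and via its crossing-sequence description its transition monoid divides a monoid obtained by combining the (aperiodic) transition monoids of $T_1$, $T_2$ and $T_3$; by the closure of aperiodicity under division and product this combined monoid, and hence the monoid of the composed \TWDFT, is aperiodic. The main obstacle I anticipate is precisely the two-way step and its interaction with the context-dependent annotation: proving that the four one-sided behaviors of $T_3$ depend only on $f_u$ requires a careful induction on the crossing sequences of the output structure in the spirit of~\cite{Shepherdson59}, and one must verify that the usefulness annotation produced by $T_2$ on positions inside $u$ -- which depends on the right context -- is accounted for correctly so that the morphism property, and thus aperiodicity, survives the composition.
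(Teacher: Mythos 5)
Your proof is correct and follows essentially the same route as the paper's: the same decomposition into the left-to-right and right-to-left sequential transducers plus the two-way transducer $\cB$, aperiodicity of each component derived from Theorem~\ref{Prop:Order} (aperiodicity of the \stm, with the $1$-boundedness guaranteeing the unique successor variable for states $(X,o)$), and the final assembly via the result of~\cite{CD15} that composition of a one-way with a two-way transducer preserves aperiodicity. Your monoid-theoretic phrasing (surjective morphisms, division, opposite monoids for the right-to-left pass) is just a more explicit packaging of what the paper argues directly at the level of runs over well-founded enriched words.
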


\begin{proof}
The aperiodicity of the three transducers gives the result as aperiodicity is preserved by composition of a one-way by a two-way~\cite{CD15}.
The aperiodicity of the two sequential transducers is straightforward since their runs depend respectively on the underlying automaton and the update function.
The aperiodicity of the \TWDFT comes from the fact that since it follows the output structure of the \sst, its partial runs are induced by the flow of variables and their order in the substitutions, which is an information contained in the \ftm and thus aperiodic thanks to Theorem~\ref{Prop:Order}.
\end{proof}

\section{From \TWDFT to copyless SST}

In~\cite{AC10}, the authors give a procedure to construct a copyless \sst from a \TWDFT.
This procedure uses the intermediate model of heap based transducers.
We give here a direct construction with similar complexity. This simplified presentation allows us tu 
prove that the construction preserves the aperiodicity.

\begin{restatable}{theorem}{TWtoSSTb}\label{Thm:2wSSTb}
Let $T$ be a \TWDFT with $n$ states.
Then we can effectively construct a copyless SST with $O((2n)^{2n})$ states and $2n-1$ variables that computes the same function.
\end{restatable}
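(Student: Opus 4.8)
The goal is to simulate a two-way deterministic finite transducer (\TWDFT) by a copyless streaming string transducer (\sst), which necessarily reads the input left-to-right in a single pass. The fundamental obstacle is that the \TWDFT may traverse any input position multiple times, moving both leftward and rightward, so a one-way \sst must somehow precompute and maintain, in its variables, all the partial output fragments that the two-way machine will eventually concatenate.

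\paragraph{The plan.}
The standard tool for simulating two-way movement in one pass is the \emph{crossing sequence} / Shepherdson construction. The plan is to have the \sst, after reading a prefix $u$, store in its state the relevant ``interface'' behavior of $T$ on $u$ — essentially the partial functions describing, for each state in which $T$ enters the block $u$ from the right boundary, the state in which it exits (either back to the right, recording a right-to-right traversal, or off the left end, which for the whole run will be impossible but matters for prefixes). This is exactly the left-to-left and left-to-right behavior information $\bh_{\ell\ell}(u)$ and $\bh_{\ell r}(u)$ from the transition monoid defined earlier. Since a deterministic two-way machine of $n$ states has at most $n$ crossings at any boundary that are pairwise distinct (a configuration cannot repeat on an accepting run), the crossing sequence has length bounded by $n$, and the interface is a partial function on $Q$. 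I would index the variables by the ``threads'' of the run crossing the current boundary: each maximal segment of the run between two consecutive visits to the right boundary of $u$ produces one output fragment, and these fragments must be stored in separate variables so they can later be concatenated in the correct order.

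\paragraph{Key steps.}
First I would fix, for each prefix, an ordering of the at most $2n-1$ output fragments that $T$ produces while its head stays within (or to the left of) that prefix but which have not yet been ``closed off'' by a later crossing; this ordering is dictated by the left-to-right reading order of the final output and explains the bound $2n-1$ on the number of variables (the crossings at a boundary alternate between leftward and rightward, giving at most $2n-1$ live threads). Second, upon reading the next letter $a$, I would update the interface function using the local transition $\delta(\cdot,a,\cdot)$ — this is a purely combinatorial update on partial functions $Q\rightharpoonup Q$, giving the state set of size $O((2n)^{2n})$. Third, and most importantly, I would define the \emph{copyless} variable update: when two threads merge (because the head crosses the boundary and links two previously separate segments), the corresponding variables are concatenated, consuming both into one; when a thread is extended, its variable is prepended or appended the single-letter output of the relevant transition. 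The crucial point is to verify that each variable is used at most once on the right-hand side of the update, which follows because distinct threads correspond to distinct segments of a single run and no segment is duplicated. Finally, the output function $F$ reads off the unique thread corresponding to the accepting run.

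\paragraph{Main obstacle.}
The hard part will be making the bookkeeping of threads precise enough that the variable update is provably copyless and the order of concatenation matches the two-way output order. The delicate case is when the head reverses direction inside $u$: a new segment may be inserted \emph{between} two existing fragments rather than at an end, so the ordering of variables must be maintained dynamically and the update must splice one variable's content in the middle of another's — which is still copyless but requires care to route each old variable to exactly one new variable. I expect that encoding the thread structure as a partial matching/ordering on boundary crossings, and proving by induction on $|u|$ that the variables' contents equal exactly the partial outputs of the corresponding run segments, will be where the real work lies; the complexity bounds $O((2n)^{2n})$ states and $2n-1$ variables then drop out of the crossing-sequence analysis.
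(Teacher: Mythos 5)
Your starting point—Shepherdson-style behavior tables as \sst states, with variables storing the outputs of partial traversals of the prefix—is the same as the paper's, but your copylessness argument contains a genuine gap: it reasons about the ``threads'' of \emph{the} accepting run, which a deterministic one-way machine cannot know. Which right-to-right traversals of the prefix $u$ actually belong to the accepting run, and in which order their fragments appear in the final output, depends on the suffix that the \sst has not yet read. So the \sst must maintain, simultaneously, the productions of \emph{all} possible right-to-right traversals of $u$, one for each state in which $T$ could enter $u$ at the right boundary. Your key claim that ``each variable is used at most once on the right-hand side of the update, which follows because distinct threads correspond to distinct segments of a single run'' fails precisely there: two traversals entering $u$ in distinct states $q_1\neq q_2$ can reach the same configuration (merge) and from then on coincide, so their common production suffix would have to be written into both variables. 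This sharing \emph{across different possible traversals} is exactly what breaks copylessness; the scheme you describe yields only a $1$-bounded \sst, which is the construction of~\cite{AFT12} that the paper explicitly sets out to improve. (A smaller slip: the behaviors relevant for a prefix are the right-to-right ones $\bh_{rr}(u)$, not $\bh_{\ell\ell}$ and $\bh_{\ell r}$ as you name them, though your verbal description is of right-entering runs.)

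The missing idea, which is the heart of the paper's proof, is to make the merge structure itself part of the state: states of the \sst are rooted forests whose leaves correspond to states of $T$ and whose internal vertices record merges of right-to-right traversals (two leaves have a common ancestor iff the corresponding traversals merge), together with an injective map assigning to each tree the common end state of its traversals. One variable is attached to each vertex, storing only the production of the segment between consecutive merges; the full traversal from $q$ is then recovered as the concatenation of the variables along the path from the leaf containing $q$ to its root. Copylessness holds because merges always combine disjoint sets of states, so each output segment is stored exactly once. This also changes the accounting: the bound of $2n-1$ variables comes from the forest having at most $n$ leaves and $n-1$ internal vertices, not from crossing-sequence length as in your proposal, and the $O((2n)^{2n})$ bound on the number of states comes from Cayley's formula applied to these labeled forests—a count your behavior-table-plus-ordering state space does not obviously support. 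Without the forest (or an equivalent device for sharing merged suffixes), the induction you propose on variable contents cannot be carried through in a copyless way.
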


\begin{proof} (Sketch of)
The main idea is for the constructed \sst to keep track of the right-to-right behavior of the prefix read until the current position, similarly to the construction of Shepherdson~\cite{Shepherdson59}. This information can be updated upon reading a new letter, constructing a one-way machine recognizing the same input language. 
The idea from~\cite{AFT12} is to have one variable per possible right-to-right run, which is bounded by the number of states.
However, since two right-to-right runs from different starting states can merge, this construction results in a 1-bounded \sst.
To obtain copylessness, we keep track of these merges and the order in which they appear. Different variables are used to store the production of each run before the merge, and one more variable stores the production after.

The states of the copyless \sst are represented by sets of labeled trees having the states of the input \TWDFT as leaves.
Each inner vertex represents one merging, and two leaves have a common ancestor if the right-to-right runs from the corresponding states merge at some point.
Each tree then models a set of right-to-right runs that all end in a same state. Note that it is necessary to also store the end state of these runs.
For each vertex, we use one variable to store the production of the partial run corresponding to the outgoing edge. 

Given such a state and an input letter, the transition function can be defined by adding to the set of trees the local transitions at the given letter, and then reducing the resulting graph in a proper way (see Figure~\ref{fig-2WcpSST}).

Finally, as merges occur upon two disjoint sets of states of the \TWDFT (initially singletons), the number of merges, and consequently the number of inner vertices of our states, is bounded by $n-1$.
Therefore, an input \TWDFT with $n$ states can be realized by an \sst having $2n-1$ variables.
Finally, as states are labeled graphs, Cayley's formula yields an exponential bound on the number of states.
\end{proof}
%
%
\begin{figure}[t!]

\begin{center}
\begin{minipage}[l]{.32\linewidth}

\vspace{-13pt}

\begin{tikzpicture}[scale=0.7]
\draw[|-|] (1,6) -- (5,6);
\draw[-|] (5,6) -- (6,6);
\draw[dotted] (1,6) -- (1,0);
\draw[dotted] (5,6) -- (5,0);
\draw[dotted] (6,6) -- (6,0);
\draw[dotted] (4,6) -- (4,0);

\node (a) at (5.5,6.3) {$a$};
\node (u) at (3,6.3) {$u$};

\foreach \i in 
	{5,4,3,2,1,0}
	{\node (q\i) at (4.5,5.5 - \i) {$q_\i$};
	\node[color=red] (a\i) at (5.5,5.5 - \i)  {$q_\i$};}

\node (q01) at (3,5) {$\bullet$};
\node (q012) at (2,4.5) {$\bullet$};
\node (q45) at (3,1.5) {$\bullet$};

\draw[->] (q0) -- (q01);
\draw[->] (q1) -- (q01);
\draw[->] (q2) -- (q012);
\draw[->] (q01) -- (q012);
\draw[->] (q4) -- (q45);
\draw[->] (q5) .. controls (1,1) and (1,1).. (q45);
\draw[->] (q3) -- (a2);

\draw[->] (q012) .. controls (1,4) and (1,4).. (q3);
\draw[->] (q45) .. controls (3.5,2.2) and (3.5,2.2).. (q3);

\draw[->,color=red] (a0) -- (6.5,5.5);
\draw[->,color=red] (a1) -- (q0);
\draw[->,color=red] (a2) -- (6.5,2.5);
\draw[->,color=red] (a3) -- (q2);
\draw[->,color=red] (a4) -- (q4);
\draw[->,color=red] (a5) -- (6.5,1.5);
\end{tikzpicture}
\end{minipage}
\begin{minipage}[l]{.05\linewidth}
$\Rightarrow$
\end{minipage}
\begin{minipage}[r]{.25\linewidth}

\vspace{-13pt}
\begin{tikzpicture}[scale=0.7]
\draw[|-|] (3,6) -- (6,6);
\draw[dotted] (3,6) -- (3,0);
\draw[dotted] (6,6) -- (6,0);
\node (u) at (4.5,6.3) {$ua$};
\foreach \i in 
	{5,4,3,2,1,0}
	{
	\node (q\i) at (5.5,5.5 - \i)  {$q_\i$};}

\node (q13) at (4.5,3.5) {$\bullet$};
\node (q134) at (3.5,2.5) {$\bullet$};

\draw[->] (q1) -- (q13);
\draw[->] (q3) -- (q13);
\draw[->] (q13) -- (q134);
\draw[->] (q4) -- (q134);

\draw[->] (q0) -- (6.5,5.5);
\draw[->] (q2) -- (6.5,2.5);
\draw[->] (q5) -- (6.5,1.5);

\draw[->] (q134) ..controls (4.5,2.6) and (4.5,2.6) .. (q2);

\end{tikzpicture}
\end{minipage}
\begin{minipage}[r]{.35\linewidth}
\caption{Left: The state of the \sst is represented in black. The red part corresponds to the local transitions of the \TWDFT. Right: After reading $a$, we reduce the new forest by eliminating the useless branches and shortening the unlabeled linear paths.}\label{fig-2WcpSST}
\end{minipage}\hfill
\end{center}
\end{figure}
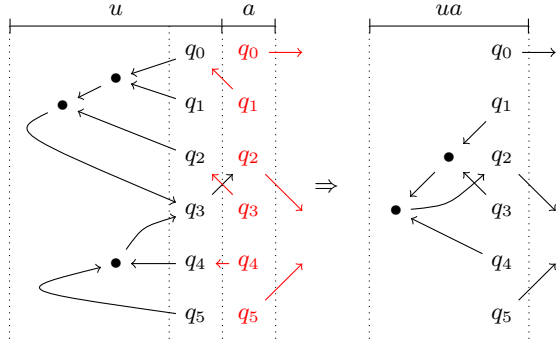

Moreover, this construction preserves aperiodicity:
\begin{restatable}{theorem}{TWtoSSTApb}
Let $T$ be an aperiodic \TWDFT.
Then the equivalent \sst constructed using Theorem~\ref{Thm:2wSSTb} is also aperiodic.
\end{restatable}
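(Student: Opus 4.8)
The plan is to show that the transition monoid of the constructed \sst divides a transition monoid built from the input \TWDFT~$T$, and then invoke the preservation of aperiodicity under division. First I would record two reductions. Since the \sst produced by Theorem~\ref{Thm:2wSSTb} is copyless, it is in particular $1$-bounded, hence bounded; by Theorem~\ref{Prop:Order} its \ftm and its \stm are simultaneously aperiodic, so it suffices to establish aperiodicity of either one. I would work with the \ftm, whose element on a word $u$ records, for every pair of \sst states (i.e.\ pair of forests) $p,q$, whether reading $u$ sends $p$ to $q$, together with the induced variable multiplicities.

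Next I would make precise how the \sst element on $u$ is determined by the behaviour of $T$. A state of the \sst is a forest whose leaves are states of $T$ and whose inner vertices record the merges of the right-to-right runs of the scanned prefix; reading $u$ updates this forest and routes the variables exactly along the way the right-to-right runs of the prefix are prolonged through $u$ and bounce back. The routing itself is governed by the four behaviours $\bh_{\ell\ell}(u),\bh_{\ell r}(u),\bh_{r\ell}(u),\bh_{rr}(u)$, which by determinism of $T$ are partial functions on its states. The one ingredient that is \emph{not} read off these four relations is the \emph{merge structure} of the right-to-right runs inside~$u$: two runs may exit~$u$ in the same state without ever sharing a configuration, so two words with identical behaviour in the monoid $A^*/\!\sim_T$ may nonetheless induce different forests. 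I would therefore enrich the transition monoid of $T$ with this merge information, obtaining a monoid $\widehat M_T$ that refines $A^*/\!\sim_T$, and check that the \ftm of the \sst divides $\widehat M_T$: the forest-transformation and the variable flow are both functions of the enriched element of $u$, and the enrichment is compatible with concatenation, yielding an onto morphism onto the \ftm.

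It then remains to prove that $\widehat M_T$ is aperiodic whenever $T$ is, which is the main obstacle, precisely because merges are strictly finer than the crossing relations. Here I would use that aperiodicity of $T$ makes $\bh(u)^{N}$ idempotent for every $N$ beyond the aperiodicity index, and analyse an idempotent two-way behaviour directly: appending one further copy of $u$ to $u^{N}$ leaves all four crossing partial functions unchanged, and I would show that it also leaves the merge forest unchanged, so that $u^{N}$ and $u^{N+1}$ have the same enriched element. The argument represents each right-to-right run of $u^{N}$ as a walk over the copy-boundaries whose steps are prescribed by the behaviours of $u$, and uses idempotency to show that no merge present in $u^{N+1}$ is genuinely new: every configuration shared by two runs already occurs, up to the stabilised boundary pattern, in $u^{N}$. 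Controlling the depth to which runs penetrate, bounded in terms of the number of states and the aperiodicity index, yields an explicit stabilisation bound and hence the aperiodicity of $\widehat M_T$. Division then transfers aperiodicity to the \ftm, and Theorem~\ref{Prop:Order} transfers it to the \stm, which is exactly what aperiodicity of the constructed \sst means.
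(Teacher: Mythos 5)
Your overall strategy is, at its core, the same as the paper's: the paper also isolates exactly your key observation that the four crossing behaviours do not determine the merges, introduces a refinement of $\sim_T$ called \emph{merge equivalence} (identical merges, in the same order, for the four kinds of partial runs), proves that $w^n\sim_T w^{n+1}$ implies merge equivalence of $w^n$ and $w^{n+1}$ (Lemma~\ref{lemma-MergeOrder}, where the preservation of the order comes from the fact that two runs can never separate again after merging), and then concludes because the \sst state $G_{uw^n}$ and the substitution $\sigma_{u,w^{n}}$ are determined by this merge data. Your packaging via an enriched monoid $\widehat M_T$, division, and the reduction to the \ftm through Theorem~\ref{Prop:Order} is a harmless variant of that argument.

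There is, however, one concrete gap: you enrich $A^*/\!\sim_T$ only with the merge structure of the \emph{right-to-right} runs of $u$. With that enrichment, both of your central claims fail: the forest transformation is not a function of the enriched element, and the enrichment is not compatible with concatenation. The reason is that the runs tracked by the \sst of Theorem~\ref{Thm:2wSSTb} make excursions into the already-read prefix: a right-to-right run of $wu$ may exit $u$ on the left, perform a right-to-right run inside $w$, and re-enter $u$ from the left, and two such runs can merge inside $u$ while both are on left-entering segments. Such a merge is a merge of left-to-right (or left-to-left) pieces of $u$, invisible in the right-to-right merge structure of $u$. Concretely, take two states whose right-to-left traversals of $u$ exit left at distinct states, whose right-to-right continuations inside $w$ exit right at distinct states, and whose subsequent left-to-right traversals of $u$ merge: the corresponding vertex of $G_{wu}$ cannot be predicted from the behaviours together with the right-to-right merges of $u$ and $w$ alone. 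This is precisely why the paper's merge equivalence requires identical merges, in the same nesting order (a point your sketch should also make explicit, since the forests record which groups merge before merging with others), in \emph{all four} behaviour relations. Once your enrichment is extended to all four kinds of partial runs, the rest of your plan goes through and essentially coincides with the paper's proof.
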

\begin{proof}
If the input \TWDFT is aperiodic of index $n$, then
for any word $w$, $w^n$ and $w^{n+1}$ merge the same partial runs for the four kinds of behaviors, by definition, and in fact the merges appear in the same order.
As explained earlier, the state $q_1$ (resp. $q_2$) reached by the constructed \sst over the inputs $uw^{n}$  (resp. $uw^{n+1}$) represents the merges of the right-to-right runs of $T$ over $uw^{n}$ (resp. $uw^{n+1}$).
Since these runs can be decomposed in right-to right runs over $u$ and partial runs over $w^n$ and $w^{n+1}$, the merge equivalence between $w^n$ and $w^{n+1}$ implies that $q_1 = q_2$. 
Moreover, since variables are linked to these merges, the aperiodicity of the merge equivalence implies the aperiodicity of both the underlying automaton and the substitution function of the \sst, concluding the proof.

\end{proof}

As a corollary, we obtain that the class of aperiodic copyless \sst 
 is expressively equivalent to first-order
definable string-to-string transductions.
\begin{corollary}
Let $f:A^*\to B^*$ be a function over words. Then  $f$ is realized by a $\FO$ graph transduction
iff it is realized by an aperiodic copyless \sst.
\end{corollary}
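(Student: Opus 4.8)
The plan is to establish the equivalence by chaining together the constructions and equivalences already developed in the paper. The corollary asserts that a function $f$ is realized by a $\FO$ graph transduction if and only if it is realized by an aperiodic copyless \sst, so I would prove the two implications separately, using the known characterization (Theorem~\cite{FKT14,CD15}) that $\FO$ graph transductions coincide exactly with aperiodic \TWDFT and with aperiodic $1$-bounded \sst.

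For the direction from aperiodic copyless \sst to $\FO$, I would first observe that every copyless \sst is in particular $1$-bounded, since the copyless restriction forces each variable to occur at most once across all right-hand sides of a substitution, which is a strictly stronger condition than $1$-boundedness. Hence an aperiodic copyless \sst is immediately an aperiodic $1$-bounded \sst, and by the cited characterization its function is $\FO$ definable. This half is essentially definitional once the inclusion of copyless into $1$-bounded is spelled out, together with the remark that aperiodicity of the transition monoid is inherited unchanged.

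For the converse, from $\FO$ to aperiodic copyless \sst, I would route through aperiodic \TWDFT. Starting from an $\FO$ graph transduction, the cited characterization yields an equivalent aperiodic \TWDFT. I would then apply Theorem~\ref{Thm:2wSSTb}, which constructs an equivalent copyless \sst, together with its aperiodicity-preservation counterpart (Theorem~\ref{TWtoSSTApb}), which guarantees that starting from an aperiodic \TWDFT the resulting copyless \sst is again aperiodic. Composing these two steps produces an aperiodic copyless \sst computing $f$, closing the cycle.

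The substantive content has already been discharged in the earlier theorems, so the corollary itself is a short assembly argument; the only point requiring care is making the notion of aperiodicity consistent across the different transition monoids. Specifically, I must ensure that ``aperiodic copyless \sst'' is interpreted via the \stm (or, equivalently for the conclusion, the \ftm), and invoke Theorem~\ref{Prop:Order} to transfer aperiodicity between the two monoids when needed. The main (mild) obstacle is thus not a new construction but verifying that the aperiodicity notions line up: the forward direction needs that \stm-aperiodicity of the copyless \sst descends to \ftm-aperiodicity (which holds since the \ftm divides the \stm), while the backward direction needs that the aperiodicity produced by Theorem~\ref{TWtoSSTApb} is of the right flavor to feed back into the $\FO$ characterization.
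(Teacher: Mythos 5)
Your proposal is correct and matches the paper's intended argument: the corollary is exactly the assembly of the cited characterization of $\FO$ via aperiodic \TWDFT and aperiodic $1$-bounded \sst, the observation that copyless implies $1$-bounded (with the same transition monoid, hence the same aperiodicity), and Theorems~\ref{Thm:2wSSTb} and its aperiodicity-preservation companion for the converse. Your attention to reconciling \stm- and \ftm-aperiodicity via Theorem~\ref{Prop:Order} and monoid division is precisely the point of care the paper relies on as well.
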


\section{From $k$-bounded to $1$-bounded \sst}


The existing construction from $k$-bounded to $1$-bounded, presented in~\cite{AFT12},
builds a copyless \sst. 
We present an alternative construction that, given a $k$-bounded \sst, directly builds an
equivalent $1$-bounded \sst. We will prove that this construction preserves aperiodicity.


\begin{restatable}{theorem}{KtoO}\label{thm:kto1}
Given a $k$-bounded \sst $T$ with $n$ states and $m$ variables, we can effectively construct 
an equivalent $1$-bounded \sst. This new \sst has $n2^N$ states and $mkN$ variables,
where $N=O(n^n (k+1)^{nm^2})$ is the size of the flow transition monoid $M_T$.
\end{restatable}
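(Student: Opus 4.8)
The plan is to construct the $1$-bounded \sst by a subset-like determinization of the variable flow, where the key idea is to split each variable of $T$ into several copies so that the $k$-boundedness is dismantled into $1$-boundedness. First I would fix the $k$-bounded \sst $T = (A,B,Q,q_0,Q_f,\delta,\cX,\rho,F)$ and recall that, by the \ftm $M_T$, the multiplicity with which a variable $X$ contributes to the final output along the run on a prefix $u$ is a bounded integer (at most $k$). The state space of the new \sst will pair a state $q\in Q$ of $T$ with an element $m$ of the flow transition monoid $M_T$ recording the flow behaviour of the prefix read so far; since $M_T$ has size $N$ and we need a subset thereof to track the set of reachable flow-patterns, this yields the announced $n2^N$ bound on the number of states. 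The variables of the new machine will be indexed triples, one copy $X^{(j)}_c$ for each original variable $X\in\cX$ ($m$ choices), each of the (at most $k$) occurrences $j$ of that variable in the output structure, and each flow-class $c$ among the $N$ possibilities, giving $mkN$ variables.

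The heart of the construction is the definition of the new update function. When $T$ performs an update $\rho(q,a)$ in which the right-hand side of some variable $Y$ contains $X$ several times, say $\sigma(Y)=\dots X\dots X\dots$, the naive update would copy $X$ more than once and violate $1$-boundedness. To avoid this, I would use the flow information stored in the state to decide, \emph{ahead of time}, which future output occurrence each copy $X^{(j)}_c$ is destined for, and route each copy to exactly one target occurrence. Concretely, the multiplicity $k$ of occurrences is resolved statically by consulting $M_T$: the image of the suffix tells us how many times each variable still flows to the output, so each of the $k$ copies can be assigned a distinct destination and thus each copy is read at most once in every new update. This is exactly the mechanism by which $k$-boundedness of $T$ is converted into $1$-boundedness of the new machine, and it is the reason the flow transition monoid (rather than a coarser object) must be carried in the state: we need to know the \emph{future} flow of variables to distribute the copies correctly.

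After setting up states, variables and updates, I would verify correctness by showing that the new $1$-bounded \sst computes the same function: one proves by induction on the length of the input that the concatenation (in output order) of the appropriate copies $X^{(j)}_c$ reconstructs the value stored in $X$ by $T$, and that the output function $F$ can be recovered by reading the copies in the order dictated by the output structure. The bookkeeping of the state size then gives $n2^N$ states, where the $2^N$ factor accounts for the subset of flow-patterns one must track (a single element of $M_T$ does not suffice because several runs may contribute), and the variable count $mkN$ follows directly from the indexing scheme; substituting the bound $N=O(n^n(k+1)^{nm^2})$ for the size of $M_T$, which follows from the $k$-boundedness (coefficients bounded by $k$) and the $Q\times\cX$ indexing, completes the complexity claim. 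The main obstacle I expect is the precise definition of the routing of variable copies through the updates so that $1$-boundedness is maintained \emph{globally} along the run and not merely locally at a single transition: one must guarantee that the static assignment of copies to output occurrences, derived from $M_T$, is consistent across consecutive updates, which is precisely where the monoid structure (composition of flow matrices) is used to glue local choices into a coherent global one.
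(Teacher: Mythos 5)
Your overall plan coincides in outline with the paper's construction: simulate a look-ahead on the variable flow via the \ftm, take states of the form $Q\times\cP(M_T)$, and keep copies of each variable indexed by a copy number in $\{1,\dots,k\}$ and a monoid element, giving $\cX\times M_T\times\{1,\dots,k\}$, i.e.\ $mkN$ variables. However, two of your justifications do not hold up, and the second one is a genuine gap. First, the state component: you describe it as recording ``the flow behaviour of the prefix read so far'' and justify the $2^N$ factor by ``several runs may contribute''. Since the \sst is deterministic, a prefix induces a \emph{unique} run and hence a unique element of $M_T$; no subset is needed for the past. The subset $S\subseteq M_T$ must instead track the possible images of the \emph{suffix} that would lead to acceptance (in the paper: $S_0=\{m\mid \delta(q_0,m)\in Q_f\}$ and, on reading $a$, $S'=\{m\mid \eta_T(a)m\in S\}$). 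It is a simulated, co-nondeterministic guess about the future, not a record of reachable flow-patterns of the prefix.

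Second, and more seriously, your argument for $1$-boundedness fails. You claim that, by consulting the suffix image, one can ``route each copy to exactly one target occurrence'' so that ``each copy is read at most once in every new update''. That would make each update copyless, and this is not achievable: in state $(q,S)$ reading $a$, the viable new guesses are $S'=\{n\mid \eta_T(a)n\in S\}$, and two distinct guesses $n_1\neq n_2$ in $S'$ may satisfy $\eta_T(a)n_1=\eta_T(a)n_2=m$. The contents of the copies associated with the old guess $m$ are then needed by \emph{both} the $n_1$-indexed and the $n_2$-indexed families of new variables, so those copies necessarily occur more than once in the update; this is exactly why the theorem produces a $1$-bounded and not a copyless \sst. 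The correct argument, which your proposal lacks, is the converse observation: a new variable with guess $n$ is built \emph{only} from old variables with the single guess $\eta_T(a)n$, so guess lineages branch forward but never merge. Within one lineage the updates use each copy at most once, and variables of distinct lineages are never recombined in any later right-hand side; hence along any run each variable occurs at most once in each image of the induced substitution. Without this never-recombine argument, your ``static assignment of copies to output occurrences'' cannot be made globally consistent, precisely because the future is not known deterministically — which is the very thing the subset construction is there to compensate for.
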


\begin{proof}
In order to move from a $k$-bounded \sst to a $1$-bounded \sst, the natural idea is to 
use copies of each variable.
However, we cannot maintain $k$ copies of each variable all the time: suppose that $X$ flows into $Y$
and $Z$, which both occur in the final output. If we have $k$ copies of $X$, we cannot produce 
in a $1$-bounded way (and we do not need to) $k$ copies of $Y$ and $k$ copies of $Z$. 

Now, if we have access to a look-ahead information, we can guess how many copies of each variable are needed, and we can easily construct a copyless \sst by using exactly the right number of copies for each variable and at each step.
The construction relies on this observation. We simulate a look-ahead through a subset construction, having copies of each variable for each possible behavior of the suffix.
Then given a variable and the behavior of a suffix, we can maintain the exact number of variables needed and perform a copyless substitution to a potential suffix for the next step.
However, since the \sst is not necessarily co-deterministic, a given suffix can have multiple successors, and the result is that its variables flow to variables of different suffixes. As variables of different suffixes are never recombined, we obtain a $1$-bounded \sst.
\end{proof}

\vspace{-.1cm}
\begin{restatable}{theorem}{KtoOAp}
Let $T$ be an aperiodic $k$-bounded \sst.
Then the equivalent $1$-bounded \sst constructed using 
Theorem~\ref{thm:kto1} is also aperiodic.
\end{restatable}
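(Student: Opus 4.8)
The plan is to prove that the \stm of the constructed $1$-bounded \sst $T'$ is aperiodic; since the \ftm of any \sst divides its \stm (as observed just after the definition of the \stm), this immediately gives that the \ftm $M_{T'}$ is aperiodic and hence that $T'$ is aperiodic. I would deliberately argue through the \stm rather than the \ftm, because the \stm records the \emph{order} in which variables occur in a substitution, and this order is exactly the information that the construction of Theorem~\ref{thm:kto1} uses to decide how to split and merge the copies of a variable. Concretely, it suffices to exhibit an integer $N$ such that for every word $w$ we have $[w^N]=[w^{N+1}]$ in $M^\sigma_{T'}$, i.e. that reading $w^N$ and $w^{N+1}$ from any state of $T'$ leads to the same target state and induces the same substitution (up to the projection $\tsig$ onto $\cX$).

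First I would separate the two ingredients of a state of $T'$: the state $q$ of $T$ and the look-ahead subset $S\subseteq M_T$. The $q$-component evolves through the underlying automaton of $T$, whose transition monoid divides the (aperiodic) \ftm $M_T$. The $S$-component evolves only through the $M_T$-behavior of the input and is independent of $q$, so the map sending a word to its action on $S$ factors through $M_T$; the resulting look-ahead monoid is therefore a quotient of $M_T$ and is aperiodic. Taking $N$ larger than the aperiodicity index of $M_T$, the equality $[w^N]_{M_T}=[w^{N+1}]_{M_T}$ guarantees that both the $q$- and the $S$-components reached after $w^N$ and after $w^{N+1}$ coincide.

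It remains to handle the variable updates, which is the core of the argument. Each variable of $T'$ is a copy of a variable $X$ of $T$, tagged by a look-ahead profile in $M_T$ and a copy index bounded by $k$, and the substitution applied by $T'$ is obtained from the substitution applied by $T$ by redistributing and merging these copies according to $S$ and to the left-to-right order of the variables of $T$. The point I would make precise is that this redistribution is a \emph{function of the \stm-class of the read word together with the current look-ahead subset}: two words with the same image in $M^\sigma_T$ induce the same variable flow \emph{and} the same order of occurrences, hence the same copy-assignment, the same merges, and therefore the same substitution on the variables of $T'$. Since $M_T$ is aperiodic by hypothesis, Theorem~\ref{Prop:Order} yields that $M^\sigma_T$ is aperiodic as well; choosing $N$ above the maximum of the aperiodicity indices of $M_T$ and $M^\sigma_T$ gives $[w^N]_{M^\sigma_T}=[w^{N+1}]_{M^\sigma_T}$ for every $w$, so the substitutions induced by $T'$ over $w^N$ and $w^{N+1}$ agree. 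Combined with the previous paragraph, this shows $[w^N]=[w^{N+1}]$ in $M^\sigma_{T'}$ for all $w$, so $M^\sigma_{T'}$, and hence $M_{T'}$, is aperiodic.

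The step I expect to be the main obstacle is precisely the claim that the copy-assignment of the construction depends only on the \stm-class of the input and on the look-ahead subset, and not on finer data. This is exactly what forces the use of the order-aware monoid $M^\sigma_T$ rather than the plain \ftm $M_T$: once several copies of the same variable coexist, the multiplicities recorded by $M_T$ no longer determine \emph{which} copy flows \emph{where}, whereas the order recorded by $M^\sigma_T$ does. Carefully establishing this dependence — and in particular checking that the merges triggered by the non-co-determinism of $T$ respect \stm-equivalence — is where the real work lies; once it is in place, the aperiodicity of $T'$ follows from the direct verification above, using the aperiodicity of $M_T$ and of $M^\sigma_T$ (Theorem~\ref{Prop:Order}) together with the fact that aperiodicity is preserved under quotients and division of monoids.
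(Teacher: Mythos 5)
Your proof is correct, and its skeleton matches the paper's: split a state of $T'$ into its $Q$-component and its look-ahead component, note that both evolve through quotients of the aperiodic \ftm $M_T$, and then show that the variable updates of $T'$ stabilize under $w^N\mapsto w^{N+1}$. The genuine difference lies in that last step. The paper never leaves the \ftm level: its proof asserts that the update of $T'$ along a run is the update of $T$ with variables relabeled by elements of $M_T$ and ``numbered accordingly'', so the flow of copies is a function of $\eta_T(u)$, the states and the sets $S$, all of which stabilize because $M_T$ is aperiodic; Theorem~\ref{Prop:Order} is not used at all. You instead route through the \stm: the $T'$-update factors through the class of the input in $M^\sigma_T$ together with the look-ahead set, $M^\sigma_T$ is aperiodic by Theorem~\ref{Prop:Order}, hence $M^\sigma_{T'}$ is aperiodic, and aperiodicity of $M_{T'}$ follows by division. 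What your route buys is robustness: it is insensitive to how the construction numbers the copies, as long as the numbering is a function of the order-aware data, and it reuses exactly the machinery the paper deploys for its \sst-to-\TWDFT aperiodicity proof (Theorem~\ref{SSTto2w:Ap}). What the paper's route buys is economy and a smaller aperiodicity index: for a cumulative, counts-based numbering (copy $i$ of $X$ being the $i$-th occurrence of $X$ in the projected final output $\tilde{\sigma}_{q,n}$), \emph{which} copy flows into \emph{which} variable is fixed by cumulative multiplicities over the fixed word $\tilde{\sigma}_{q',n}$, i.e.\ by \ftm data alone. For that reason your motivating claim --- that the multiplicities in $M_T$ cannot determine which copy flows where --- is in fact false for such a numbering; but since the \stm refines the \ftm, this over-caution costs you nothing in correctness. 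Finally, note that both your argument and the paper's rest on the same core assertion, namely that the per-letter numberings compose along a run into the canonical numbering of the composite; you at least flag it explicitly as the crux, whereas the paper compresses it into the words ``numbered accordingly''.
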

As a corollary, we obtain that for 
the class of aperiodic bounded \sst is expressively equivalent to first-order
definable string-to-string transductions.
\begin{corollary}
Let $f:A^*\to B^*$ be a function over words. Then $f$ is realized by a $\FO$ graph transduction
iff it is realized by an aperiodic bounded \sst 
($k\in\mathbb{N}_{>0}$).
\end{corollary}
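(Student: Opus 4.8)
The plan is to prove that the flow transition monoid $M_{T'}$ of the constructed $1$-bounded \sst $T'$ is aperiodic. Since the \ftm of any \sst divides its \stm, it suffices to show that the \emph{substitution} transition monoid $M_{T'}^\sigma$ is aperiodic, and for this I would exhibit a division of $M_{T'}^\sigma$ by aperiodic monoids obtained from $T$. The starting point is that $T$ is aperiodic, so $M_T$ is aperiodic, and hence by Theorem~\ref{Prop:Order} the substitution transition monoid $M_T^\sigma$ is aperiodic as well; denote by $N_0$ its aperiodicity index.

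Recall the three data carried by $T'$ along a run: the current state $q$ of $T$, a look-ahead subset $S$ of behaviors of $M_T$ maintained by a subset construction, and, for each variable $X \in \cX$ of $T$, a bounded family of copies $(X,j,\beta)$ indexed by a copy number $j \in \{1,\dots,k\}$ and a behavior $\beta \in M_T$. I would treat these components separately. The state component $q$ evolves along the underlying one-way automaton of $T$, whose transition monoid is aperiodic since $T$ is. The look-ahead component $S$ evolves by the action induced on subsets of $M_T$: for a word $w$ with image $x$ in $M_T$, the update $S \mapsto S \cdot w$ is computed by applying to each element of $S$ the relation coded by $x$, and is therefore \emph{entirely determined by $x$}. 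Consequently, whenever $x^{N_0} = x^{N_0+1}$ in $M_T$, we get $S \cdot w^{N_0} = S \cdot w^{N_0+1}$ for every $S$; the subset construction thus inherits the aperiodicity of $M_T$ with the same index.

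The delicate part, and the expected main obstacle, is the variable component. Here I would use that the \stm element induced by $w^{N_0}$ equals that induced by $w^{N_0+1}$, so that not only the multiplicities but also the \emph{order} in which the variables of $\cX$ appear in the induced substitutions $\tsig$ coincide for $w^{N_0}$ and $w^{N_0+1}$. The update of $T'$ is obtained from this ordered flow by dispatching each occurrence of a variable $X$ to one of its copies $(X,j,\beta)$, where the copy number $j$ is assigned according to the left-to-right order of the occurrences and the tag $\beta$ is read off the look-ahead subset reached at the relevant position. The key point to verify is that this dispatch is a \emph{canonical} function of the aperiodic data, i.e. that it depends only on the ordered $\cX$-flow recorded by $M_T^\sigma$ and on the subset values, with no extrinsic choice; this is exactly what makes the full variable update of $T'$ over $w$ a function of the pair (\stm element of $T$, look-ahead action), and hence equal on $w^{N_0}$ and $w^{N_0+1}$.

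Putting the pieces together, the map sending a word to its image in $M_{T'}^\sigma$ factors through its images in the transition monoid of the underlying automaton of $T$, in the subset-action monoid over $2^{M_T}$, and in $M_T^\sigma$; since the first two are themselves controlled by $M_T$, a divisor of $M_T^\sigma$, the element of $M_{T'}^\sigma$ is in fact a function of the $M_T^\sigma$-image alone. This yields an onto morphism from a submonoid of $M_T^\sigma$ onto $M_{T'}^\sigma$, i.e. $M_{T'}^\sigma$ divides $M_T^\sigma$. As aperiodicity is preserved by division, $M_{T'}^\sigma$ is aperiodic of index at most $N_0$; and since $M_{T'}$ divides $M_{T'}^\sigma$, the \ftm of $T'$ is aperiodic, which is precisely the statement that $T'$ is aperiodic.
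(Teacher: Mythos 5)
Your proposal is correct and follows essentially the same route as the paper: it proves the aperiodicity-preservation result for the construction of Theorem~\ref{thm:kto1} via the same decomposition the paper uses (the state part inherits aperiodicity from $T$, the look-ahead subsets are determined by the image in $M_T$, and the copy-dispatch is determined by the \emph{ordered} variable flow, i.e.\ by the \stm, which is aperiodic by Theorem~\ref{Prop:Order}), with your division claim that $M_{T'}^\sigma$ divides $M_T^\sigma$ being a cleaner packaging of the paper's ``numbered accordingly'' well-definedness assertion. The corollary itself then follows, exactly as in the paper, by combining this with the known equivalence between \FOT and aperiodic $1$-bounded \sst and the trivial inclusion of $1$-bounded \sst into bounded \sst --- a final assembly step that you, like the paper, leave implicit.
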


\section{Perspectives}
%
%
There is still one model equivalent to the generic machines whose aperiodic subclass elude our scope yet,
namely the \emph{functional two-way transducers}, which correspond to non-deterministic two-way transducers realizing a function.
To complete the picture, a natural approach would then be to consider the constructions from~\cite{dS13} and prove that aperiodicity is preserved.
One could also think of applying this approach to other varieties of monoids, such as the $\mathcal{J}$-trivial monoids, equivalent to the boolean closure of existential first-order formulas $\mathcal{B}\Sigma_1[<]$.
Unfortunately, the closure of such transducers under composition requires some strong properties on varieties (at least closure under semidirect product) which are not satisfied by varieties less expressive than the aperiodic.
Consequently the construction from \sst to \TWDFT cannot be applied.
On the other hand, the other construction could apply, providing one inclusion.
Then an interesting question would be to know where the corresponding fragment of logic would position.

\bibliography{biblio}

\begin{thebibliography}{10}

\bibitem{AC10}
R.~Alur and P.~{\v{C}}ern{\'y}.
\newblock Expressiveness of streaming string transducers.
\newblock In {\em FSTTCS}, volume~8 of {\em LIPIcs.}, pages 1--12. Schloss
  Dagstuhl. Leibniz-Zent. Inform., 2010.

\bibitem{ADT13}
R.~Alur, A.~Durand{-}Gasselin, and A.~Trivedi.
\newblock From monadic second-order definable string transformations to
  transducers.
\newblock In {\em LICS}, pages 458--467, 2013.

\bibitem{AFT12}
R.~Alur, E.~Filiot, and A.~Trivedi.
\newblock Regular transformations of infinite strings.
\newblock In {\em LICS}, pages 65--74, 2012.

\bibitem{CD15}
O.~Carton and L.~Dartois.
\newblock Aperiodic two-way transducers and fo-transductions.
\newblock In {\em CSL}, volume~41 of {\em LIPIcs}, pages 160--174. Schloss
  Dagstuhl. Leibniz-Zent. Inform., 2015.

\bibitem{CV77}
M.~P. Chytil and V.~J{\'a}kl.
\newblock Serial composition of {$2$}-way finite-state transducers and simple
  programs on strings.
\newblock In {\em Automata, languages and programming}, pages 135--137. LNCS,
  Vol. 52. Springer, Berlin, 1977.

\bibitem{PhDartois}
L.~Dartois.
\newblock {\em {M\'ethodes alg\'ebriques pour la th\'eorie des automates}}.
\newblock PhD thesis, LIAFA-Universit\'e Paris Diderot, Paris, 2014.

\bibitem{dS13}
R.~de~Souza.
\newblock Uniformisation of two-way transducers.
\newblock In {\em LATA}, pages 547--558, 2013.

\bibitem{EH01}
J.~Engelfriet and H.~J. Hoogeboom.
\newblock M{SO} definable string transductions and two-way finite-state
  transducers.
\newblock {\em ACM Trans. Comput. Log.}, 2(2):216--254, 2001.

\bibitem{Filiot-ICLA15}
E.~Filiot.
\newblock Logic-automata connections for transformations.
\newblock In {\em ICLA}, volume 8923 of {\em LNCS}, pages 30--57. Springer,
  2015.

\bibitem{FGRS-LICS13}
E.~Filiot, O.~Gauwin, P.-A. Reynier, and F.~Servais.
\newblock From two-way to one-way finite state transducers.
\newblock In {\em LICS}, pages 468--477. IEEE Computer Society, 2013.

\bibitem{FKT14}
E.~Filiot, S.~N. Krishna, and A.~Trivedi.
\newblock First-order definable string transformations.
\newblock In {\em FSTTCS}, volume~29 of {\em LIPIcs}, pages 147--159. Schloss
  Dagstuhl - Leibniz-Zent. Inform., 2014.

\bibitem{MP71}
R.~McNaughton and S.~Papert.
\newblock {\em Counter-free automata}.
\newblock The M.I.T. Press, Cambridge, Mass.-London, 1971.

\bibitem{Schutzenberger65}
M.~P. Sch{\"u}tzenberger.
\newblock On finite monoids having only trivial subgroups.
\newblock {\em Information and Control}, 8:190--194, 1965.

\bibitem{Shepherdson59}
J.~C. Shepherdson.
\newblock The reduction of two-way automata to one-way automata.
\newblock {\em IBM Journal of Research and Development}, 3(2):198--200, 1959.

\end{thebibliography}
\bibliographystyle{abbrv}

\newpage
\section*{Appendix}
\appendix

\subsection*{Substitution Transition Monoid}
\PropOrder*
\begin{proof}
Let $T$ be a $k$-bounded SST.
We define a \emph{loop} as the run induced by a pair $(q,u)\in Q\times A^*$ such that $\delta(q,u)=q$.
Suppose now that $M_T$ is aperiodic, and let $n$ be its aperiodicity index.
Wlog, we assume that the transition function of $T$ is complete.
This implies that for all states $p$ of $T$, there exists a state $q$ such that $p\xrightarrow{u^{n}} q \xrightarrow{u} q$.
Then if the image in \stm of the loops (i.e. the set of all $\tsig$ such that there exists a loop $(q,u)$ such that $f_u(q)=(q,\tsig)$) are aperiodic with index $m$, then the \stm is aperiodic with index at most $n+m$.

Consequently, in the following $\sigma$ denotes the substitution of a loop of $T$, and we aim to prove that
 $\tsig^{(k+1)\ell}=\tsig^{(k+1)\ell+1}$.

Before proving this though, we define the relation $\lessdot\subseteq \cX\times\cX$ as follows.
Given two variables $X$ and $Y$, we have $X\lessdot Y$ if there exists a positive integer $i$ such that $X$ flows into $Y$ in $\sigma^i$. This relation is clearly transitive. The next lemma proves that it is also anti-symmetric, hence we can use this relation as an induction order to prove the result.
\begin{lemma}\label{Lemma:flow}
Given two different variables $X$ and $Y$, if $X\lessdot Y$, then $Y\not\!\!\lessdot X$.

\end{lemma}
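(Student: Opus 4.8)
The plan is to read the statement as an acyclicity property of the variable-flow graph of the loop substitution $\sigma$, and to derive a contradiction from a putative cycle by playing $k$-boundedness against aperiodicity. I would write $M$ for the flow matrix of $\sigma$ over $\cX$, so that the entry $M^t[X][Y]$ counts the occurrences of $X$ in $\tsig^t(Y)$; the composition convention gives $M^{s+t}=M^sM^t$, hence $M^t$ is the $t$-th power of $M=M^1$. This matrix is $k$-bounded (all its powers have entries at most $k$, since every iterate of the loop is a genuine run of $T$) and aperiodic: because $\delta(q,u)=q$, determinism forces the unique run over $u^t$ to stay in state $q$, so the $(q,\cdot)$-block of $m_{u^t}=m_u^{\,t}$ in $M_T$ is exactly $M^t$, and the identity $m_u^{\,n}=m_u^{\,n+1}$ descends to $M^n=M^{n+1}$. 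In these terms $X\lessdot Y$ means precisely that there is a directed path of length $\geq 1$ from $X$ to $Y$ in the support digraph of $M$, and the lemma asserts that this digraph has no directed cycle through two distinct vertices.

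So I would assume, for contradiction, that $X\lessdot Y$ and $Y\lessdot X$ with $X\neq Y$, witnessed by $M^i[X][Y]\geq 1$ and $M^j[Y][X]\geq 1$ for some $i,j\geq 1$. Composing the two paths gives $M^{i+j}[X][X]\geq 1$, so the set $S=\{\,t\geq 1 : M^t[X][X]>0\,\}$ of closed-walk lengths at $X$ contains $p:=i+j\geq 2$. Since walks concatenate, $S$ is a sub-semigroup of $(\mathbb{N}_{>0},+)$; let $d=\gcd(S)$. The contradiction then comes from splitting on $d$, each branch invalidating one of the two hypotheses on $M$.

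If $d\geq 2$, every closed walk at $X$ has length divisible by $d$, while $S$, being a numerical semigroup of $\gcd$ equal to $d$, contains every sufficiently large multiple of $d$; writing $B=\mathrm{supp}(M)$ (so that $\mathrm{supp}(M^t)=B^t$, no cancellation occurring among nonnegative entries) we get $B^{md}[X][X]=1$ yet $B^{md+1}[X][X]=0$ for all large $m$, contradicting the stabilization $B^t=B^{t+1}$ $(t\geq n)$ forced by aperiodicity. If instead $d=1$, then $S$ is cofinite and contains two coprime lengths $a,b$; fixing closed walks $W_a,W_b$ at $X$ of these lengths and concatenating $m$ copies of each in arbitrary order produces at least $\binom{2m}{m}$ distinct closed walks of length $m(a+b)$, whence $M^{m(a+b)}[X][X]\geq\binom{2m}{m}\to\infty$, contradicting $k$-boundedness. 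Either way the cycle cannot exist, which is the claim. The main obstacle I anticipate is packaging this interplay cleanly: neither hypothesis alone suffices, since the $1$-bounded swap $X\mapsto Y,\ Y\mapsto X$ is a $k$-bounded cycle while a strongly connected support with cycles of coprime lengths is aperiodic as a Boolean matrix, so the proof must genuinely invoke both. Care is also needed to justify that the loop's flow matrix truly inherits aperiodicity from $M_T$ (via determinism at $q$) and that the integer powers and their supports agree.
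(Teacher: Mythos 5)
Your reduction of the lemma to properties of the loop's flow matrix $M$ is sound, and it genuinely departs from the paper's argument: the block identity $M^n=M^{n+1}$ (via $\delta(q,u)=q$ and determinism), the identification of $X\lessdot Y$ with reachability in the support digraph, and the case $d\geq 2$ (where Boolean stabilization of powers contradicts the fact that every closed-walk length at $X$ is a multiple of $d$) are all correct. The gap is in the case $d=1$: the claim that concatenating $m$ copies of $W_a$ and $m$ copies of $W_b$ ``in arbitrary order'' yields at least $\binom{2m}{m}$ \emph{distinct} closed walks is not justified, and it is false for an arbitrary choice of $W_a,W_b$ of coprime lengths. If $X$ carries a self-loop $e$, then every positive length lies in $S$, and nothing prevents your chosen walks from being $W_a=e^a$ and $W_b=e^b$; every interleaving is then the same walk $e^{m(a+b)}$ and the count collapses to $1$. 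Distinctness of all interleavings is exactly the statement that $W_a$ and $W_b$ generate a free subsemigroup of the free monoid of edge sequences, which holds if and only if $W_aW_b\neq W_bW_a$; this requires both an argument and, in general, a more careful choice of the two walks.

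The fix is standard but must be said. If $W_aW_b=W_bW_a$, then (Lyndon--Sch\"utzenberger) both are powers of a common word $W$, which is itself a closed walk at $X$ whose length divides $\gcd(a,b)=1$, i.e.\ a self-loop $e$ at $X$. In that degenerate case replace the pair by $(e,W_p)$, where $W_p$ is the closed walk of length $p=i+j$ through $Y$ provided by the assumed cycle: $W_p$ visits $Y\neq X$, hence is not a power of $e$, so $e$ and $W_p$ do not commute and do generate a free subsemigroup; interleaving $m$ copies of each gives $M^{m(1+p)}[X][X]\geq\binom{2m}{m}$, contradicting $k$-boundedness. With this patch your proof is complete. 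For comparison, the paper's own proof avoids both the gcd analysis and the combinatorics on words by exploiting aperiodicity of the \ftm with multiplicities rather than just supports: from the cycle it deduces that $X$ and $Y$ both occur in $\sigma^{n}(X)$ and in $\sigma^{n}(Y)$, hence $\sigma^{2n}(X)$ contains at least two occurrences of $X$; since $\sigma^{2n}$ and $\sigma^{n}$ have equal flow counts, iterating doubles the number of occurrences of $X$ in $\sigma^{n}(X)$ without bound, contradicting $k$-boundedness. Your route is longer but isolates which hypothesis excludes which kind of cycle: Boolean aperiodicity kills periodic cycles, boundedness kills free pairs of cycles.
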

\begin{proof}
We proceed by contradiction.
Assume that there exist two different variables $X$ and $Y$ and two integers $i$ and $j$ such that $X$ occurs in $\sigma^i(Y)$ and $Y$ occurs in $\sigma^j(X)$.

Then for any $k>0$, $X$ occurs in $\sigma^{k(i+j)}(X)$
and $Y$ occurs in $\sigma^{k(i+j)+j}(X)$.
As $T$ is aperiodic of index $n$, for $k$ large enough it means that both $X$ and $Y$ occur in both $\sigma^{n}(X)$ and $\sigma^{n}(Y)$. 
Then $\sigma^{2n}(X)$ contains both $\sigma^{n}(X)$ and $\sigma^{n}(Y)$ and thus contains 
at least two occurrences of $X$ and $Y$.
Then by aperiodicity we have $\sigma^{2n}(X)=\sigma^{n}(X)$ thus $\sigma^n(X)$ contains two occurrences of $X$.
By iterating this process, we prove that the number of occurrences of $X$ in  $\sigma^{n}(X)$ is not bounded, 
yielding a contradiction.
\end{proof}
We now prove that for all variables $X$ in $\cX$, $\tsig^{(k+1)\ell}(X)=\tsig^{(k+1)\ell+1}(X)$ by treating the following two cases:
\begin{itemize}
\item If $X\in \sigma(X)$, then either $\tsig(X)=X$ and then $\tsig^2(X)=\tsig(X)$, or there exists $Y\neq X$ such that $Y\in\tsig(X)$. In the latter case, we get by iteration that for all $i>0$, $|\tsig^i(X)|> \Sigma_{j<i} |\tsig^j(Y)|$. Then as $T$ is $k$-bounded, we have $|\tsig^i(X)|\leq k\ell$ and thus  $\Sigma_{j<i} |\tsig^j(Y)|$ is bounded, and $\tsig^{k\ell}(Y)=\epsilon$, which proves that $\tsig^{k\ell}(X)=\tsig^{k\ell+1}(X)$.
\item If $X\not\in\sigma(X)$, let us consider the relation $\lessdot$. By Lemma~\ref{Lemma:flow} this relation is cycle-free.
Then there is a lesser level, on which there are the variables $Y$ such that $\tsig(Y)\subseteq \{Y$\}. There, either $\tsig(Y)=\emptyset$ and aperiodicity becomes trivial, or $\tsig(Y)=Y$ and the case was dealt with in the previous point and is thus aperiodic with index $k\ell$.
Now we can end the proof by reasoning by induction on $\lessdot$, as if $X\not\in\sigma(X)$ and all variables $Y\lessdot X$ are aperiodic with index $i$, then $\tsig^{i+1}(X)$ can be written as the concatenation of 
$\tsig^{i}(Y)$, for aperiodic variables $Y$ of index $i$. Then $\tsig^{i+1}(X)=\tsig^{i+2}(X)$.
The proof is concluded by noticing that the length of the longest chain of $\lessdot$ is bounded by $\ell$.\end{itemize}\end{proof}

\subsection*{From $1$-bounded \sst to \TWDFT}

\SSTtoTw*

\begin{proof}
Let $T=(A,B,Q,q_0,Q_f,\delta,\cX,\rho,F)$ be a SST.
Let us construct a two-way transducer $\cA$ that realizes the same function.
The transducer $\cA$ will follow the output structure (see Figure~\ref{Fig:OutputStructure}) of $T$ and construct the output as it appears in the structure.
To make the proof easier to read, we define $\cA$ as the composition of a left-to-right sequential transducer $\cA'$, a right-to-left sequential transducer $\cA''$ and a 2-way transducer $\cB$.
Remark that this proves the result as two-way transducers are closed under composition with sequential ones.
The transducer $\cA'$ does a single pass on the input and enriches it with the transition used by $T$ in the previous step.
The second transducer uses this information, and enriches the input word with the set of variables corresponding to the variables that will be produced from this position.
The last transducer is more interesting: it uses the enriched information to follow the output structure of $T$. 
The \emph{output structure} of a run is a labeled and directed graph such that, for each variable $X$ useful at a position $j$, we have two nodes $X_i^j$ and $X_o^j$ linked by a path whose concatenated labels form the value stored in $X$ at position $j$ of the run (see~\cite{FKT14} and Figure~\ref{Fig:OutputStructure}).
The set of variables will be used to clear the non determinism due to the $1$-bounded property. Note that in the case of a copyless SST, the transducer $\cA''$ can be omitted.
We now explain how the several transducers behave on a given run $r=q_0\xrightarrow{a_1} q_1 \ldots \xrightarrow{a_n} q_n$.

The transducer $\cA'=(A,A\times Q,Q\times \uplus\{f\},q_0,\alpha,\beta,\{f\})$, which enriches the input word with the transitions of the previous step, can be done easily with a 1-way transducer, which first stores the transition taken in its state, then outputs it along the current letter read.
Given a letter $a$ and a state $q$, we have the transitions $(q,a) \xrightarrow{a\mid (a,q)} \delta(q,a)$.
We also have $(q,\dashv)\xrightarrow{\dashv\mid (\dashv,q)} f$.
Then on the run $r$, if $u=a_1\ldots a_n$, we get the output word $\cA'(\vdash u\dashv )=(a_1,q_0)(a_2,q_1)\ldots (a_n,q_{n-1})(\dashv,q_n)$. 

The transducer $\cA''=(A\times Q,A\times Q \times 2^\mathcal{X},2^\mathcal{X}\uplus\{i,f'\},i,\alpha,\beta,\{f'\})$, which enriches each letter of the input word with the variables effectively produced from this step, can be done easily with a right-to-left sequential transducer, which starts with the variables appearing in $F(q_n)$.
Given a letter $(a,q)$ and a set $S$, we have $S \xrightarrow{(a,q)\mid(a,q,S)} S'$ where $S'=\{X\in\cX \mid \exists Y \in S\ X\text{ occurs in } \rho(q,a,Y)\}$.
Then given an input word $(a_1,q_0)(a_2,q_1)\ldots (\dashv,q_n)$, we define $S_n=F(q_n)$, and $S_{i-1}=\{X \in\mathcal{X}\mid \exists Y\in F(q_n)\  X\in\sigma_{q_{i-1},a_i\ldots a_n}(Y)\}$, we get the output word $\cA''\circ\cA'(\vdash u \dashv)=(a_1,q_0,S_1)(a_2,q_1,S_2)\ldots (a_n,q_{n-1},S_n) (\dashv,q_n,\emptyset)$. 

The aim of the third transducer $\cB$ is to follow the output structure of $T$, which can be defined as follows:
The \emph{output structure} of a run is a labeled and directed graph such that, for each variable $X$ useful at a position $j$, we have two nodes $X_i^j$ and $X_o^j$ linked by a path whose concatenated labels form the value stored in $X$ at position $j$ of the run.
Formally, the output structure of a run $q_0\xrightarrow{u}q_n$ is the oriented graph over $\cX\times [1,|u|]\times \{i,o\}$ whose edges are labeled by output and are of the form:
\begin{itemize}
\item $((X,j,i),v,(Y,j-1,i))$ if $\rho(q_{j-1},a_j,X)$ starts with $vY$.
\item $((X,j,o),v,(Y,j-1,i))$ if there exists $Z$ such that  $XvY$ appears in $\rho(q_{j-1},a_j,Z)$,
\item  $((X,j,o),v,(Y,j+1,o))$ if $\rho(q_{j-1},a_j,Y)$ ends by $Xv$,
\item  $((X,j,i),v,(X,j+1,o))$ if $\rho(q_{j-1},a_j,X)=v$.
\end{itemize}
We furthermore restrict to the connected component corresponding to the actual output of the run.

Now let $\cB=(A\times Q\times 2^\mathcal{X},B,P,p_0,\mu,\nu,\{f\})$ be defined by:
\begin{itemize}
\item $P=\cX\times\{i,o\}\uplus \{p_0,f\}$ is the set of states.
		The transducer does a first left-to-right reading of the input in state $p_0$.
		The subset $\cX\times\{i,o\}$ will then be used to follow the output structure while keeping track of which variable we are currently producing. The set $\{i,o\}$ stands for $in$ and $out$ and corresponds to the similar notions in the output structure. 
		Informally, $in$ states will move to the left, while $out$ states move to the right.
		The states $p_0$ and $f$ are new states that are respectively initial and final.
\end{itemize}
The transition function $\mu:P\times A\times Q\times 2^\mathcal{X}\to P\times\{-1,0,+1\}$ and the production function $\nu: P\times A\times Q\times 2^\mathcal{X}\to B^*$ are detailed below.
In the following, we consider that the transducer is in state $p$ reading the triplet $t=(a,q,S)$ or one of the endmarkers (see Figure~\ref{Fig:Update}).
\begin{itemize}
\item If $p=p_0$ and $a\neq \dashv$, then we set $\mu(p_0,t)=(p_0,+1)$ and $\nu(p_0,t)=\epsilon$.

\item If $p=p_0$ and $a=\dashv$, then if $F(q)$ starts by $uX$ with $u\in B^*$ and $X\in\cX$, then $\mu(p,t)=((X,i),-1)$ and $\nu(p,t)=u$.
\item If $p=(X,i)$, and $t\neq\vdash$ then:
\begin{itemize}
\item either $\rho(q,a)(X)=u\in B^*$ and does not contain any variable, and we set
 $\mu(p,t)=((X,o),+1)$ and $\nu(p,t)=u$,
\item or
  $\rho(q,a)(X)$ starts by $uY$ with $u\in B^*$ and $Y\in\cX$, then $\mu(p,t)=((Y,i),-1)$ and $\nu(p,t)=u$.
\end{itemize}
\item If $p=(X,i)$, and $t=\vdash$ then 
		$\mu(p,t)=((X,o),+1)$ and $\nu(p,t)=\epsilon$.
\item If $p=(X,o)$ and $a\neq\dashv$, then let $Y$ be the unique variable of $S$ such that $X$ appears in $\rho(q,a)(Y)$. Then we have:
\begin{itemize}
	\item either $\rho(q,a)(Y)$ ends by $Xu$ with $u$ in $B^*$ and we set
	$\mu(p,t)=((Y,o),+1)$ and $\nu(p,t)=u$,
	\item or $\rho(q,a)(Y)$ is of the form $(B\cup \cX)^*XuX'(B\cup \cX)^*$ and we set
	$\mu(p,t)=((X',i),-1)$ and $\nu(p,t)=u$.
\end{itemize}
	Note that the unicity of such $Y$ in $S$ is due to the $1$-boundedness property.
	If $T$ is copyless, then this information is irrelevant and $\cA''$ can be bypassed.
	
\item If $p=(X,o)$, $q\in Q_f$ and $a=\dashv$ then:\\
	either $F(q)$ ends by $Xu$ with $u$ in $B^*$ and we set
	$\mu(p,t)=(f,+1)$ and $\nu(p,t)=u$,\\
	or $F(q)$ is of the form $(B\cup \cX)^*XuX'(B\cup \cX)^*$ and we set
	$\mu(p,t)=((X',i),-1)$ and $\nu(p,t)=u$.
\end{itemize}

Then we can conclude the proof as $T=\cB\circ\cA''\circ\cA'$ and 2-way transducers are closed by composition \cite{CV77}.

Regarding complexity, a careful analysis of the composition of a one-way transducer of size 
$m$ with a two-way transducer of size $n$  from~\cite{PhDartois,CD15} shows that this can 
be done by a two-way transducer of size $O(n m^m)$.
Then given a $1$-bounded SST with $n$ states and $m$ variables, we can construct a deterministic two-way transducer of size $O(m (2^m)^{2^m} n^n)=O(m 2^{m2^m} n^n)$.
If $T$ is copyless, the sequential right-to-left transducer can be omitted, and
the resulting \TWDFT is of size $O(m n^n)$.
\end{proof}

\SSTtoTwAp*

\begin{proof}
We prove separately the aperiodicity of the three transducers.
Then the result comes from the fact that aperiodicity is preserved by composition of a one-way by a two-way~\cite{CD15}.

First, consider the transducer $\cA'$. It is a one-way transducer that simply enriches the input word with transitions from $T$, each enrichment corresponding to the transition taken by $T$ in the previous step.
Then since $T$ is aperiodic, so is its underlying automaton.
Then the enrichment and thus $\cA'$ are aperiodic.

Secondly, given an input word,  the transducer $\cA''$ stores at each position the set of variables that will be output by $T$.
Now as $T$ is aperiodic, the flow of variable is aperiodic. Thus the value taken by this set is aperiodic and so is $\cA''$.

Now, consider the transducer $\cB$ and a run $r$ of $\cB$ over $u^n$ starting in state $p$.
Note that the fact that there exists a run over an enriched input word $v$ implies that it is well founded, meaning that it is the image of some word of $A^*$ by $\cA''\circ\cA'$.
If $p$ is of the form $(X,i)$, then the run starts from the right of $u^n$ and follows the 
substitution $\sigma_r(X)$. It exits $u^n$ either in state $(X,o)$ on the right if $\sigma_r(X)$ is a word of $B^*$, or in a state $(Y,i)$ on the left where $Y$ is the first variable appearing in $\sigma_r(X)$.
In both cases the state at the end of the run only depends on the underlying automata of $T$ and the order of variables appearing in the substitution induced by the run. Since the substitution transition monoid is aperiodic if $T$ is aperiodic by Theorem~\ref{Prop:Order}, and then a similar run exists over $u^{n+1}$.

Finally, if $p$ is of the form $(X,o)$, then the state in which the run exits $u^n$ depends on the unique variable $Y$ such that $X$ belongs to $\sigma_r(Y)$ and $Y$ belongs to the set of variables of the last letter of the input.
Then the run follows the substitution $\sigma_r(Y)$.
It will exit the input word in state $(X',i)$ on the left if $XX'$ appears in $\tsig_r(Y)$ for some variable $X'$ and in state $(Y,o)$ otherwise. As the flow of variable as well as the underlying automaton are aperiodic, a similar run exists over $u^{n+1}$.

We conclude the proof by noticing that the same arguments will hold to reduce runs over $u^{n+1}$ to runs over $u^n$.
\end{proof}

\subsection*{From \TWDFT to copyless \sst}
\TWtoSSTb*

Let $T=(A,B,Q,i,\delta,\gamma,F)$ be a \TWDFT.
Let us suppose that the transducer $T$ starts to read its input from the end, and not from the beginning, i.e., given an input $w$, the initial configuration is $(q_0,|w|+1)$.
Moreover, let us suppose that for any transition $(p,\vdash,q,m)$ of $T$, $\gamma(p,\vdash,q,m) = \epsilon$.
Note that any \TWDFT can be transformed with ease into a transducer satisfying those two properties.

In order to reproduce the behavior of $T$ with an \sst $T'$, we need to keep track of the right-to-right runs of $T$.
Moreover, as we want $T'$ to be copyless, it is not possible to store the production of a right-to-right run into a single variable, since two different runs might share a common suffix, and require copy in order to update the corresponding variables.
This leads us to modelize the right-to-right runs of $T$ with rooted forests whose vertices are included into $2^Q \setminus \emptyset$.
The idea is that, given two states $q_1, q_2 \in Q$, a merging between the right-to-right runs starting from $q_1$ and $q_2$ can be represented by adding an edge from both $\{q_1\}$ and $\{q_2\}$ towards $\{q_1,q_2\}$.
Formally, we use the set $\mathcal{F}_Q$ of rooted forests $G = (V,E)$ such that $V$ is a subset of $2^Q \setminus \emptyset$, and the following properties are satisfied.
\begin{itemize}
\item
The roots of $G$ are disjoint subsets of $Q$.
\item
For every vertex $s$, the sons of $s$ are disjoint proper subsets of $s$.
\end{itemize}
Note that two graphs of $\mathcal{F}_Q$ with same set of vertices are equal, hence each element of $\mathcal{F}_Q$ is uniquely defined by its set of vertices.
In order to also keep track of the target states of the right-to-right runs, the states of $T'$ are pairs $(G,\map)$, where $G \in \mathcal{F}_Q$, and $\map$ is an injective function mapping each tree of $G$, corresponding to a set of merging runs, to an element of $Q$, corresponding to the target state of those runs.
In the following definitions, for every vertex $v$ of $G$ we will usually denote by $\map(s)$ the state $\map(T_s)$, where $T_s$ is the tree containing the vertex $s$.

For every word $w$ over the alphabet $\bar{A} = A \cup \{ \vdash, \dashv \}$, we now expose an inductive construction of a pair $(G_w,\map_w)$, where $G_w \in \mathcal{F}_Q$ and $\map_w$ maps the roots of $G_w$ to $Q$, that contains all the information concerning the right-to-right runs of $T$ over the input $w$, and their mergings.
Formally, for every state $q \in Q$ that appears at least in one vertex of $G_w$, let $s_q \in 2^Q$ be the vertex of $G_w$ of minimal size such that $q \in s_q$.
Then we want the following property to be satisfied.

\begin{description}
\item[\prop{1}]
For all $(p,q) \in \bh_{rr}(w)$, we have $\map_{w}(s_p) = q$.
\end{description}

Let $G_{\epsilon} \in \mathcal{F}_Q$ be the graph on $0$ vertex, and let $\map_{\epsilon}: V \rightarrow Q$ be the empty function.
Note that $\prop{1}$ is trivially satisfied for $w = \epsilon$, as $\bh_{rr}(\epsilon)$ is empty, by definition.
Now, let $w \in \bar{A}^*$, let $a \in \bar{A}$, and suppose that $(G_w,\map_w)$ is defined such that $\prop{1}$ is satisfied for $w$.
Then $(G_{wa},\map_{wa})$ is built based on $(G_{w},\map_{w})$ in three steps.
First, we build a graph $G_{wa}'$ by adding to $G_w$ edges corresponding to the function $\map_w$.
Second, we build a graph $G_{wa}''$ by adding to $G_{wa}'$ the local transitions induced by the letter $a$.
Finally, we reduce $G_{wa}''$ into an element $G_{wa} = (V_{wa},E_{wa})$ of $\mathcal{F}_Q$.
\begin{itemize}
\item
Let $G_{wa}' = (V_{wa}',E_{wa}')$ be the graph defined by $V_{wa}' = V_w \cup Q^{\ins}$, where $Q^{\ins}$ is a copy of the set $Q$, and $E_{wa}' = E_w \cup \{(\map_w^{-1}(p),p^{\ins}) \in 2^Q \times Q^{\ins} | p \in \textup{Im}(\map_w) \}$.
Since \prop{1} is satisfied for $w$ by supposition, for every $(p,q) \in \bh_{rr}(w)$ there is a path in $G_{wa}'$ between $s_p$ and $q^{\ins}$.
\item
Let $G_{wa}'' = (V_{wa}'',E_{wa}'')$ be the graph defined by $V_{wa}'' = V_{wa}' \cup Q^{\outs}$, where $Q^{\outs}$  is a copy of the set $Q$, and let $E_{wa}'' = E_{wa}' \cup \{(p^{\ins},\tau(p)) \in Q^{\ins} \times V_{wa}' | p \in Q \},$
where
\[
\tau(p) = 
\left \{
\begin{array}{lll}
q^{\outs} \in Q^{\outs} & \textup{ if } \delta(p,a) = (q,+1),\\
q^{\ins} \in Q^{\ins} & \textup{ if } \delta(p,a) = (q,0),\\
s_q \in V_w & \textup{ if } \delta(p,a) = (q,-1) \textup{ and $s_q$ is the smallest vertex containing $q$}.
\end{array}
\right .
\]
For every $(p,q) \in \bh_{rr}(wa)$, there exists a path in $G_{wa}'$ between $p^{\ins}$ and $q^{\outs}$.
\item
Let $\inp(s) : V_{wa}'' \rightarrow 2^Q$ be the function mapping each vertex $s$ of $G_{wa}''$ to the set of states $q$ such that there exists a path from $q^{\ins}$ to $s$ in $G_{wa}''$.
Moreover, let $\outp(s) : V_{wa}'' \rightarrow Q \cup \{ \bot \}$ be the function mapping each vertex $s$ of $G_{wa}''$, to $\bot$ if the unique path starting from $s$ loops infinitely, and to $q \in Q$ if the target of this path is the vertex $q^{\outs} \in Q^{\outs}$.
Let $G_{wa} = (V_{wa},E_{wa})$, where 
$V_{wa} = \{ \inp(s) \subset Q | s \in V_{wa}'', \outp(s) \neq \bot \}$,
and let $\map_{wa}$ be the function mapping each vertex $\inp(s)$ of $V_{wa}$ to $\outp(s)$.
For every $(p,q) \in \bh_{rr}(wa)$, since $\outp(p^{\ins}) = q$, $\inp(p^{\ins})$ is a vertex of $G_{wa}$, and $\map_{wa}(\inp(p^{\ins})) = \outp(p^{\ins}) = q$, hence \prop{1} is satisfied, as $\inp(p^{\ins}) = s_p$.
\end{itemize}

\begin{remark}\label{remark:pres_rtr}
Since the construction of $G_{wa}$ only depends on $G_w$ and $a$, for every $w'\in \bar{A}^*$ such that $G_{w'} = G_{w}$, we have $G_{w'a} = G_{wa}$.
\end{remark}

\begin{remark}\label{remark:pres_mer}
By construction, for each subset $s$ of $Q$, $s$ is a vertex of the graph $G_{w}$ if and only the subset of right-to-right runs of $T$ over $w$ containing the runs whose starting state belongs to $s$ merge at some point, before merging with any other.
\end{remark}

The next results will allow us to obtain the bound over the size of the set of states presented in the statement of Theorem \ref{Thm:2wSSTb}.

\begin{lemma}\label{lemma:bound_vert}
Every rooted forest $G = (V,E)$ in $\mathcal{F}_Q$ has at most $2|Q|-1$ vertices.
\end{lemma}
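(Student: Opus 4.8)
The plan is to show that the vertex set $V$ of any $G\in\mathcal{F}_Q$ forms a \emph{laminar family} of nonempty subsets of $Q$ (any two vertices are either nested or disjoint), and then to invoke the classical bound that a laminar family of distinct nonempty subsets of an $n$-element set has at most $2n-1$ members. Note that each vertex occurs at most once, since $V\subseteq 2^Q\setminus\emptyset$ is a set of subsets, so "distinct" is automatic.

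First I would record a monotonicity fact that follows directly from the defining conditions: since the sons of any vertex are \emph{proper} subsets of it, transitivity along a root-to-node path shows that every vertex is a subset of each of its ancestors, and in particular is contained in the root of its tree. Using this, I establish laminarity by a case analysis on two distinct vertices $s,t$. If one is an ancestor of the other in the forest, the fact above gives $s\subsetneq t$ or $t\subsetneq s$. Otherwise $s$ and $t$ are incomparable: if they lie in distinct trees, then they are contained in two distinct roots, which are disjoint by hypothesis, so $s\cap t=\emptyset$; if they lie in the same tree, let $u$ be their lowest common ancestor and let $c_s\neq c_t$ be the two distinct sons of $u$ through which the paths to $s$ and $t$ pass (these exist and differ because $s,t$ are incomparable and distinct, so neither equals $u$). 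Since $s\subseteq c_s$, $t\subseteq c_t$, and the sons of $u$ are pairwise disjoint, we again obtain $s\cap t=\emptyset$. Hence $V$ is laminar.

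It remains to bound $|V|$. I would prove, by induction on $n=|Q|$, the pair of statements: $(A)$ every laminar family of distinct nonempty subsets of an $n$-set has at most $2n-1$ elements, and $(B)$ the same family has at most $2n-2$ elements when it does not contain the whole ground set. For $(B)$, consider the maximal sets $M_1,\dots,M_r$ of the family, which are pairwise disjoint proper subsets with $\sum_i|M_i|\le n$; splitting the family according to which $M_i$ contains each set and applying $(A)$ to each block of size $|M_i|<n$ gives a total of at most $\sum_i(2|M_i|-1)\le 2n-r\le 2n-2$, the last inequality being checked separately in the cases $r\geq 2$, $r=1$ (where $M_1\subsetneq [n]$ forces $|M_1|\le n-1$), and $r=0$. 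Statement $(A)$ then follows: if the ground set is not a member we use $(B)$ directly, and otherwise we delete it and add $1$ to the bound from $(B)$. Applying $(A)$ with ground set $Q$ yields $|V|\le 2|Q|-1$.

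The main obstacle is the laminarity step rather than the counting: one must argue carefully that forest-incomparability forces set-disjointness, which is exactly where the disjointness of sibling sons and of roots is combined with the "vertex $\subseteq$ its root" observation. Once laminarity is established, the counting is the standard laminar-family bound, and the two-part induction above makes the value $2|Q|-1$ fall out cleanly.
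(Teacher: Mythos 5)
Your proof is correct, and it takes a genuinely different route from the paper's. The paper argues by induction on $|Q|$ directly on the forest structure: it removes the roots of $G$, observes that each resulting subtree rooted at a vertex $s_i$ is itself an element of $\mathcal{F}_{s_i}$, and sums the induction-hypothesis bounds $2|s_i|-1$, using disjointness of roots and of siblings. You instead forget the forest entirely: you first show that the vertex set is a laminar family of distinct nonempty subsets of $Q$ (a step the paper never makes explicit, since it works with the tree order directly), and then prove the classical $2n-1$ bound for laminar families via your two-statement induction (A)/(B) on maximal members. The two counting inductions are structurally parallel (maximal members play the role of roots, blocks play the role of subtrees), but your version is tighter in the details: the paper's displayed computation tacitly assumes a single root (it writes $|V| = 1 + |V_1| + \dots + |V_m|$), and its final inequality $2\sum_i|s_i| + 1 - m \leq 2|Q|-1$ only holds as written when $m \geq 2$, whereas your explicit case split on $r \geq 2$, $r = 1$, $r = 0$ covers the degenerate configurations. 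What the paper's approach buys is brevity and direct reuse of the recursive definition of $\mathcal{F}_Q$; what yours buys is robustness and generality, since the bound is shown to depend only on the set-theoretic structure of $V$ (any laminar family qualifies), at the cost of the extra laminarity verification via lowest common ancestors.
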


\begin{proof}
This is proved by induction over $|Q|$.
If $|Q| = 1$, $|2^Q \setminus \emptyset| = 1$, hence $|V| \leq 1 = 2|Q| -1$.
Now suppose that $|Q| > 1$, and that the the result is true for every set $Q'$ such that $|Q'| < |Q|$.
Let $G'$ be the graph obtained by removing all the roots of $G$.
Then $G'$ is a union of trees $G_1 = (V_1,E_1), G_2 = (V_2,E_2), \ldots, G_m = (V_m,E_m)$.
For every $1 \leq i \leq m$, the forest $G_i$ is an element of $\mathcal{F}_{s_i}$, where $s_i \subset Q$ denotes the root of $G_i$.
Therefore, by using the induction hypothesis, we have
\[
\begin{array}{lll}
|V| & = & 1 + |V_1| + |V_2| + \ldots + |V_m|\\
& \leq & 1 + (2|s_1| - 1) + (2|s_2| - 1) + \ldots + (2|s_m| - 1)\\
& = & 2(|s_1| + \ldots +|s_m|) + 1 - m\\
& \leq & 2|Q| - 1.
\end{array}
\]
\end{proof}

\begin{lemma}\label{lemma:bound_size}
Let $n = |Q|$.
The size of $\mathcal{F}_Q$ is smaller than or equal to $\frac{(2n)^{2n-2}}{(n-2)!}$.
\end{lemma}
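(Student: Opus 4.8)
The plan is to bound $|\mathcal{F}_Q|$ by exhibiting a surjection from the set of labelled trees on $2n$ vertices onto $\mathcal{F}_Q$ all of whose fibres have size at least $(n-2)!$. Cayley's formula counts the labelled trees on a fixed set of $2n$ vertices as exactly $(2n)^{2n-2}$; since the fibres of such a surjection are pairwise disjoint, this immediately gives $|\mathcal{F}_Q|\cdot (n-2)! \leq (2n)^{2n-2}$, which is the claimed inequality.

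First I would collect the structural facts already at hand. By the remark preceding Lemma~\ref{lemma:bound_vert}, every $G \in \mathcal{F}_Q$ is determined by its vertex set, a laminar family of nonempty subsets of $Q$, and by Lemma~\ref{lemma:bound_vert} this family has at most $2n-1$ members. Its minimal vertices are pairwise disjoint nonempty subsets of $Q$, hence number at most $n$, and the sets of size at least two in a laminar family number at most $n-1$. Accordingly a forest can be drawn as a rooted tree on at most $2n$ vertices: at most $n$ leaves naming the elements of $Q$, at most $n-1$ internal vertices (one per set of size at least two), and one extra virtual root $r$ joining the roots of the forest.

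Next comes the encoding. Fix the $2n$-element label set $Q \sqcup \{r\} \sqcup \{c_1,\dots,c_{n-1}\}$. Given a labelled tree on this set, root it at $r$, read the vertices labelled by $Q$ as the elements they name, and output the laminar family $\Phi(T) \in \mathcal{F}_Q$ read off from the merge structure of these elements, discarding the identities of the auxiliary labels $c_i$. This map is surjective: every laminar family is realised by attaching each element of $Q$ as a leaf below the smallest set containing it and creating one auxiliary internal vertex per set of size at least two, a realisation that fits in the $2n$ vertices by the bound of the previous paragraph. Moreover $\Phi$ depends only on the tree shape together with the positions of $r$ and of the $Q$-labels, not on which auxiliary label sits where; since the labels $c_1,\dots,c_{n-1}$ occupy distinct vertices, permuting them produces $(n-1)!\geq (n-2)!$ distinct labelled trees with the same image. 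Hence every fibre has size at least $(n-2)!$, and the counting argument above concludes.

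I expect the main obstacle to be making $\Phi$ simultaneously well defined and surjective onto \emph{all} of $\mathcal{F}_Q$, not merely onto the ``reduced'' forests. A laminar family may contain single-child chains, minimal sets of size at least two, and ``private'' elements lying in an internal set but in none of its children; in the naive merge encoding, where each leaf names a singleton that is forced into the family, these configurations are not faithfully represented. The delicate point is therefore to enrich the encoding so that each $Q$-leaf records whether it witnesses an actual singleton of the family or only a private element, and so that single-child vertices are handled, all without exceeding the budget of $2n$ vertices (this is exactly where Lemma~\ref{lemma:bound_vert} is used, and where the observation that every internal node acquires at least two children keeps the internal count at most $n-1$) and without shrinking the fibres below $(n-2)!$. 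A fallback, should the uniform encoding prove awkward, is to bound $|\mathcal{F}_Q|$ by summing Cayley's estimate over the possible numbers of internal vertices and dividing by the factorial arising from their interchangeability.
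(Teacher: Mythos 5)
Your skeleton is the paper's skeleton — Cayley's formula on labelled trees over $2n$ vertices, a decoding map whose fibres have size at least $(n-2)!$, hence $|\mathcal{F}_Q|\cdot(n-2)!\leq (2n)^{2n-2}$ — but the proof is not complete, because the map $\Phi$ at its core is never actually constructed, and the obstruction you flag in your closing paragraph is fatal to your encoding rather than a detail to be polished. In your leaf-based scheme (one leaf per element of $Q$, one auxiliary internal vertex per set of size at least two, one root $r$), the attachment point of the leaf $q$ is forced: it must hang below the auxiliary vertex of the smallest set of size at least two containing $q$, and this is the same whether or not $\{q\}$ itself belongs to the family. Concretely, the families $\{\{1,2\}\}$ and $\{\{1\},\{2\},\{1,2\}\}$ receive the same tree, so no decoding convention can make $\Phi$ surjective onto all of $\mathcal{F}_Q$. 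Repairing this means storing one extra bit per element of $Q$ (``is $\{q\}$ a member of the family?'') in the tree shape alone, invariantly under permutation of the auxiliary labels, inside a vertex budget that your encoding already exhausts ($n$ leaves, up to $n-1$ auxiliary vertices, one root); you name this enrichment but do not produce it, and your fallback (summing Cayley's bound over the number of internal vertices) is a statement about counting, not about faithfulness of the representation, so it does not touch the problem.

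The paper escapes the difficulty with a different encoding in which elements of $Q$ get no vertices of their own: the vertices of the tree \emph{are} the sets of the family, turned into a single tree by adding a vertex $s_\bot$ adjacent to all roots of the forest and, using Lemma~\ref{lemma:bound_vert}, a padding path ending in a vertex $s_\top$ so that exactly $2n$ vertices are present; each state $q$ is then used as a \emph{label}, placed on the vertex $s_q$ given by the smallest set containing $q$, and each set is recovered as the collection of $q$-labels occurring in its subtree. Under this encoding a singleton $\{q\}$ in the family is its own vertex carrying the label $q$, while a private element $q$ of a set $S$ merely contributes the label $q$ to the vertex $S$, and single-child chains are just chains — the ambiguities that block your construction never arise. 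Only the labels $\bot$, $\top$ and the states are meaningful, i.e.\ at most $n+2$ labelled vertices, so the remaining at least $n-2$ labels can be permuted freely, which is exactly the factor $(n-2)!$. (The paper is itself terse on the case where two states share the same minimal set, so that one vertex carries two labels and the accounting against single-labelled trees needs care; but unlike yours, its encoding does distinguish any two distinct families.) If you want to complete your write-up, the shortest route is to abandon the leaf-based picture and adopt this sets-as-vertices, states-as-labels encoding.
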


\begin{proof}
By Cayley's Formula, there exists exactly $(2n)^{2n-2}$ labeled trees on $2n$ vertices.
The result follows from the fact that any element of $\mathcal{F}_Q$ can be represented by a tree on $2n$ vertices, of which at most $n+2$ are labeled, which justifies the denominator, as the labels of $(n-2)$ vertices can be forgotten.

Given an element $G = (V,E)$ of $\mathcal{F}_Q$, we know that $|V| \leq 2n-1$ by Lemma \ref{lemma:bound_vert}.
Let $G'$ be the tree on $2n$ vertices obtained by adding to $G$ a vertex $s_\bot$, an edge from each root of $G$ to $s_\bot$, and, if $|V| < 2n-1$, a linear path composed of $2n-|V|-1$ new vertices starting from a vertex $s_{\top}$, and whose end is linked to $s_{\bot}$. 
Then $G$ can be computed back from any graph isomorphic to $G'$, as long as we set the label $\bot$ and $\top$ to the vertices corresponding to $s_{\bot}$ and $s_{\top}$, and the label $q$ to the vertex corresponding to $s_q$ , where $s_q$ is the smallest vertex of $V$ containing $q$, for every $q$ appearing in a vertex of $G$.
\end{proof}

\begin{corollary}\label{lemma:bound_size2}
Let $n = |Q|$.
The size of $\{ (G_w,\map_w)| w \in \bar{A}^* \}$ is smaller than or equal to $(2n)^{2n}$.
\end{corollary}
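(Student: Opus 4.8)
The plan is to bound the target set $\{(G_w,\map_w)\mid w\in\bar{A}^*\}$ by the (possibly) larger set of \emph{all} admissible states of $T'$, that is, all pairs $(G,\map)$ with $G\in\mathcal{F}_Q$ and $\map$ an injective function mapping the trees (equivalently, the roots) of $G$ to $Q$, and then to count this latter set by combining Lemma~\ref{lemma:bound_size} with a count of the admissible maps $\map$ for a fixed $G$.

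First I would fix $G\in\mathcal{F}_Q$ and bound the number of injective functions $\map$ that can accompany it. Since the roots of $G$ are pairwise disjoint nonempty subsets of $Q$, there are at most $n=|Q|$ of them, hence $G$ has at most $n$ trees. The number of injective functions from a set of size $r\le n$ into $Q$ is the falling factorial $n(n-1)\cdots(n-r+1)$, which is maximized at $r=n$ and is therefore bounded by $n!$. Thus each $G$ admits at most $n!$ companions $\map$.

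Multiplying this by the bound $|\mathcal{F}_Q|\le \frac{(2n)^{2n-2}}{(n-2)!}$ from Lemma~\ref{lemma:bound_size} yields
\[
\big|\{(G,\map)\}\big|\;\le\;\frac{(2n)^{2n-2}}{(n-2)!}\cdot n!\;=\;(2n)^{2n-2}\,n(n-1).
\]
It then remains to observe that $n(n-1)\le n^2\le 4n^2=(2n)^2$, whence the product is at most $(2n)^{2n-2}(2n)^2=(2n)^{2n}$, as claimed. The degenerate case $n=1$, where $(n-2)!$ is not defined, is handled separately and trivially, since both $\mathcal{F}_Q$ and the set of available maps have size at most one, so the bound $(2n)^{2n}=4$ holds a fortiori.

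Since the substantive combinatorics was already carried out in Lemma~\ref{lemma:bound_size}, there is no real obstacle here: the corollary is essentially an arithmetic consequence. The only point requiring care is the factor counting the maps $\map$, where one must invoke the disjointness of the roots to cap their number at $n$ (so that the crude bound $n!$ applies) and note that the $\frac{1}{(n-2)!}$ in Lemma~\ref{lemma:bound_size} cancels with $n!$ down to the quadratic factor $n(n-1)$, which is then absorbed into the spare factor $(2n)^2$.
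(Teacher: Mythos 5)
Your proposal is correct and follows essentially the same route as the paper's proof: fix $G\in\mathcal{F}_Q$, bound the number of injective maps $\map$ by $n!$ using the fact that $G$ has at most $n$ trees, and multiply by the bound on $|\mathcal{F}_Q|$ from Lemma~\ref{lemma:bound_size}. The paper states the final inequality $n!\,|\mathcal{F}_Q| \leq (2n)^{2n}$ without detail; you merely spell out the cancellation $n!/(n-2)! = n(n-1) \leq (2n)^2$ (and the trivial $n=1$ case), which is a harmless elaboration of the same argument.
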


\begin{proof}
For every $w \in \bar{A}^*$, since $\map_w$ is an injective function mapping the trees of $G_w$ to $Q$, and $G_w$ contains at most $n$ trees,
\[
|\{ (G_w,\map_w)| w \in \bar{A}^* \}| \leq n! |\mathcal{F}_Q| \leq (2n)^{2n}
\]
\end{proof}

An other consequence of Lemma \ref{lemma:bound_vert} is that for every word $w \in \bar{A}^*$, the graph $G_w = (V_w,E_w)$ admits an injective vertex labeling $\lambda_w : V_w \rightarrow \mathcal{X}$, where $\mathcal{X} = (X_1, \ldots, X_{2|Q|-1})$ is a set containing $2|Q| - 1$ variables.
We shall now present for every $w \in \bar{A}^*$, the construction of a substitution $\sigma_w \in \mathcal{S}_{\mathcal{X},B}$ that will allow us, together with the graph $G_w$ and its vertex labeling $\lambda_w$, to describe the output production of the right-to-right runs of $T$ over $w$.
Formally, for every $(p,q) \in \bh_{rr}(w)$, let $w_{p,q} \in B^*$ denote the production of the corresponding right-to-right run.
Moreover, for every vertex $s$ of $G_w$, let $\bar{\lambda}_w(s)$ denote the concatenation of the $\lambda$-labels of the vertices forming the path starting from the $s$ in $G_w$, and for every state $q \in Q$ that appears at least in one vertex of $G_w$, let $s_q \in 2^Q$ be the vertex of $G_w$ of minimal size such that $q \in s_q$.
Then we want the following property to be satisfied.

\begin{description}
\item[\prop{2}]
For all $(p,q) \in \bh_{rr}(w)$, we have
$
(\sigma_{w})(\bar{\lambda}_w(s_p)) = w_{p,q}.
$
\end{description}

Let $\sigma_{\epsilon}$ be the substitution mapping each variable to $\epsilon$.
Once again, since $\bh_{rr}(w)$ is empty, \prop{2} is trivially satisfied for $w = \epsilon$.
Let $w \in \bar{A}^*$, let $a \in \bar{A}$, and let us suppose that \prop{2}  is satisfied for $w$.
The substitution $\sigma_{wa}$ is defined as the composition of a substitution $\sigma_{w,a}$, whose construction we will now present, with $\sigma_w$. 
In order to build $\sigma_{w,a}$, we define a vertex labeling $\mu_{wa} : V_{wa} \rightarrow (\mathcal{X} \cup B)^*$.
We require $\mu_{wa}$ to be copyless, i.e., for any variable $X \in \mathcal{X}$, there exists at most one vertex $s$ such that $X$ occurs in $\mu_{wa}(s)$.
Then, we define $\sigma_{w,a}$ as the copyless substitution mapping $\lambda_{wa}(s)$ to $\mu_{wa}(s)$.
The labeling $\mu_{wa}$ is obtained by first extending the labeling $\lambda_w$ of $V_w$ to a labeling $\mu''_{wa}$ of $V_{wa}''$, and then reducing it to $G_{wa}$.
We denote by $\bar{\mu}_{wa}(s)$ (resp. $\bar{\mu}_{wa}''(s)$) the concatenation of the labels of the vertices forming the path starting from a vertex $s$ of $G_{wa}$ (resp. $G_{wa}''$).

\begin{itemize}
\item
Let $\mu_{wa}'' : V_{wa}'' \rightarrow \mathcal{X} \cup B^*$ be the substitution mapping $s \in V_w$ to $\lambda_w(s)$, $q^{\outs} \in Q^{\outs}$ to $\epsilon$, and $q^{\ins} \in Q^{\ins}$ to $\gamma(p,a,q,m) \in B^*$, where $\delta(p,a) = (q,m)$.
Since $\lambda_w$ is injective, this labeling is copyless.
Moreover, since \prop{2} is satisfied for $w$ by supposition, by definition of $G_{wa}''$ we have, for every $(p,q) \in \bh_{rr}(wa)$, 
\[
\sigma_{w}(\bar{\mu}_{wa}''(p^{\ins})) = w_{p,q}.
\]
\item
For every $t \in V_{wa}$, the set of vertices $s$ of $G_{wa}''$ such that $\inp(s) = t$ is not empty, and they form a path $s_1, \ldots, s_m$.
Let $\mu(s) = \mu'(s_1) \ldots \mu'(s_m)$.
Since $\mu_{wa}''$ is copyless, so is $\mu_{wa}$.
Moreover,
\[
(\sigma_{w})(\bar{\mu}_{wa}(\inp(p^{\ins}))) = w_{p,q}.
\]
Since $\sigma_{wa} = \sigma_w \circ \sigma_{w,a}$, $\sigma_{w,a}(\lambda_{wa}(s)) = \mu_{wa}(s)$ by definition, and $\inp(p^{\ins}) = s_p$, this proves that \prop{2} is satisfied for $wa$.
\end{itemize}

\begin{remark}\label{remark:pres_pr}
Since the construction of $\sigma_{w,a}$ only depends on $G_w$ and $a$, for every $w'\in \bar{A}^*$ such that $G_{w'} = G_{w}$, we have $\sigma_{w',a} = \sigma_{w,a}$.
\end{remark}

We are now ready to define formally the copyless \sst $T'=(A,B,P,j,Q_f,\alpha,\mathcal{X},\beta,F')$.
\begin{itemize}
\item $P= \{  (G_{\vdash w},\map_{\vdash w}) | w \in A^* \}$,
\item $j = (G_{\vdash},\map_{\vdash})$,
\item $Q_f = \{ (G_{\vdash w},\map_{\vdash w}) | \map_{\vdash w \dashv}( i ) \in F \}$,
\item $\alpha: P \times A \rightarrow P, ((G_{\vdash w},\map_{\vdash w}),a) \mapsto (G_{\vdash wa},\map_{\vdash wa})$, which is well-defined, by Remark \ref{remark:pres_rtr},
\item $\mathcal{X}=\{X_i | 1 \leq i \leq 2|Q|-1\}$,
\item $\beta: P \times A \rightarrow \mathcal{S}_{\mathcal{X},B}, ((G_{\vdash w},\map_{\vdash w}),a) \mapsto \sigma_{\vdash w,a}$, which is well-defined, by Remark \ref{remark:pres_pr},
\item $F': Q_f \rightarrow (\mathcal{X} \cup B)^*$, $(G_{\vdash w},\map_{\vdash w})) \mapsto \sigma_{\vdash w \dashv}(\map_{\vdash w \dashv}(i))$.
\end{itemize}

The state reached by $T'$ over an input word $w \in A^*$ is $G_{\vdash w}$, and the substitution induced by the corresponding run is $\sigma_{\vdash w}$.
Therefore, by \prop{1} and the definition of $Q_f$, the domain of the functions defined by $T$ and $T'$ are identical.
Moreover, since $\sigma_{\vdash}$ is the substitution mapping all the variables to $\epsilon$, by supposition, we have, by \prop{2} and the definition of $F'$, that the image of a given word by those two functions are also identical, hence the functions are the same.

\TWtoSSTApb*

To prove this theorem, we introduce a new equivalence relation, and prove that the aperiodicity of the \TWDFT implies aperiodicity of this relation. The aperiodicity of the \sst then follows from this.
We say that two words $v$ and $w$ are \emph{merge equivalent} ($v\sim_m w$) if they induce the same merges in the same order in their four behavior relations.
Let us remark that if $v\sim_m w$, then for any word $u$, $G_{uv}=G_{uw}$.
This is due to the fact that $G_{uv}$ represents the merges of the right-to-right runs over $uv$, and these runs can be decomposed in right-to right runs over $u$ and partial runs over $v$.

\begin{lemma}\label{lemma-MergeOrder}
Let $w \in \bar{A}^*$ be such that $w^n \sim_T w^{n+1}$.
Then $w^n \sim_m w^{n+1}$.
\end{lemma}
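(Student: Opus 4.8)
The plan is to establish the equality of merges, and then of their order, separately for each of the four behaviour relations; since $\sim_T$ is the conjunction of $\LL$, $\LR$, $\RL$ and $\RR$, and the arguments for the other three cases are obtained from the right-to-right one by reversing the input and/or swapping the two endmarkers, I would spell out only the right-to-right case. First I would dispose of the \emph{which runs merge} part, which is the ``by definition'' step. By Remark~\ref{remark:pres_mer} the trees of the merge forest of a word $u$ are exactly the maximal sets of start states whose right-to-right runs eventually coincide, and by \prop{1} together with the injectivity of $\map_u$ each such tree is labelled by the common exit state of its runs, distinct trees carrying distinct labels. Hence two states lie in the same tree iff their right-to-right runs share the same exit state, so the partition of $Q$ into trees, and its labelling by $\map_u$, is entirely determined by $\bh_{rr}(u)$. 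As $w^n\sim_T w^{n+1}$ gives $\bh_{rr}(w^n)=\bh_{rr}(w^{n+1})$ by definition of $\sim_T$, the two forests have the same trees with the same labels.

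It then remains to show that inside each tree the merges are nested in the same order, which is the real content of $\sim_m$. To formalise the order I would read the forest incrementally: letting $\Phi_w$ denote the fixed transformation on forests induced by reading one block $w$ (the composition of the per-letter updates), we have $G_{\vdash w^{i}}=\Phi_w^{i}(G_{\vdash})$, and the nesting inside a tree records the order, along this reading, in which its sub-runs first reach a common configuration. Equivalently, I would tag each merge of two right-to-right runs by the location (block index and crossing slot) at which the two runs first coincide, so that the tree order becomes the left-to-right order of these locations. The goal thus reduces to building a bijection between the right-to-right runs over $w^{n}$ and those over $w^{n+1}$ that preserves coincidences and the relative order of their locations.

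To produce this bijection I would exploit that the single hypothesis $w^n\sim_T w^{n+1}$ already forces $e:=[w^n]$ to be idempotent in $M_T$: right-multiplying by $[w]$ repeatedly yields $[w^n]=[w^{m}]$ for all $m\geq n$, whence $e\cdot e=[w^{2n}]=e$. Consequently the block region of $w^{n+1}$ is ``stabilised'': using the equalities of the four behaviour relations across a block whose behaviour is $e$, any excursion of a run through a suffix $w^{j}$ with $j\geq n$ can be matched with an excursion through $w^{j-1}$, and two runs coincide during such an excursion over $w^{n+1}$ iff the matched excursions coincide over $w^{n}$. Deleting the one extra stabilised block that distinguishes $w^{n+1}$ from $w^{n}$ should then merely shift every merge location by one block while preserving their left-to-right order, which gives the desired order-preserving bijection and hence $w^n\sim_m w^{n+1}$.

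The hard part will be precisely this last invariance. A priori, lengthening the word could allow two runs to reach their common configuration earlier or later relative to a third run, thereby changing the nesting; ruling this out requires a careful crossing-sequence analysis showing that a block whose four behaviours are those of the idempotent $e$ neither creates a new coincidence nor reorders two existing ones, but only translates their locations. I expect essentially all the technical work to lie in this step; once it is carried out for the right-to-right case and transported to the other three by symmetry, combining it with the first paragraph completes the proof.
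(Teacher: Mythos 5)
Your opening paragraph is fine and matches the paper's first step: since $w^n \sim_T w^{n+1}$, the two words admit the same partial runs in each of the four behaviour relations, hence the same sets of runs merge, and only the \emph{order} of the merges remains to be compared. From that point on, however, the proposal contains two genuine problems. The first is the symmetry reduction: reversing the input exchanges $\bh_{rr}$ with $\bh_{\ell\ell}$ and $\bh_{r\ell}$ with $\bh_{\ell r}$, but no symmetry of the problem (and certainly not swapping the endmarkers) turns a \emph{returning} right-to-right run into a \emph{crossing} right-to-left run. So the crossing cases cannot be transported from the right-to-right case, and in the paper's proof these are precisely the cases that require their own argument.

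The second and more serious problem is that the actual content of the lemma is never proved: you reduce order preservation to showing that a block whose behaviours are those of the idempotent $[w^n]$ ``neither creates a new coincidence nor reorders two existing ones,'' and then explicitly defer this to ``a careful crossing-sequence analysis,'' noting yourself that essentially all the technical work lies there. That deferred step \emph{is} the lemma, so the proposal is a plan rather than a proof. What it misses is the short argument the paper uses to close exactly this step: by determinism, once two runs have merged they can never separate again, so merging is monotone along a run. Consequently, for merges of right-to-left (resp.\ left-to-right) runs, the rightmost (resp.\ leftmost) $n$ iterations of $w$ inside $w^{n+1}$ form a copy of $w^n$ that already performs all the merges of $w^n$ in the same order, and the one extra block can only add later merges, never undo or permute the ones already made; for right-to-right and left-to-left runs, equality of behaviours gives that the very same runs occur over both words, so the same merges occur. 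With this monotonicity observation the elaborate machinery you set up (order-preserving bijection of runs, idempotency of $[w^n]$, block-and-slot tagging of merge locations) becomes unnecessary; without it, the proposal does not establish the statement.
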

\begin{proof}
Let $w$ and $n$ be such that $w^n \sim_T w^{n+1}$.
By definition, they model the same partial runs, and thus the same merges appear. Thus we only need to prove that they appear in the same order.
Consider two merges that appear consecutively in $w^n$.
We prove that they appear in the same order in $w^{n+1}$, depending on the kind of partial run that we consider.
If the merges are of right-to-right or left-to-left runs, then the exact same runs appear in $w^{n+1}$ and thus the same merges appear.
If they affect right-to-left (resp. left-to-right) runs, then the rightmost (resp. leftmost) $n$ iterations of $w$ in $w^{n+1}$ will merge these run. By noticing that after merging, two runs can not be separated again, the two merges will appear in the same order in $w^{n+1}$, concluding the proof.
\end{proof}

The two following lemmas now conclude the proof, since the aperiodicity of both the underlying automaton and the substitution imply the aperiodicity of the whole \sst.

\begin{lemma}\label{TWtocSST-ApSt}
Let $w \in \bar{A}^*$ be such that $w^n \sim_T w^{n+1}$.
Then for every $u \in \bar{A}^*$, $G_{uw^{n+1}} = G_{uw^{n}}$.
\end{lemma}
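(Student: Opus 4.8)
The plan is to reduce the statement to the merge equivalence $\sim_m$ and then invoke the structural remark (preceding Lemma~\ref{lemma-MergeOrder}) that merge-equivalent suffixes yield identical prefixed forests. The hypothesis $w^n \sim_T w^{n+1}$ says exactly that $w^n$ and $w^{n+1}$ agree on all four behavior relations $\bh_{\ell\ell}$, $\bh_{\ell r}$, $\bh_{r\ell}$, $\bh_{rr}$, so the whole difficulty is to turn this behavioral equality into equality of the forests after prepending an arbitrary $u$.

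First I would apply Lemma~\ref{lemma-MergeOrder} to upgrade the behavioral equality $w^n \sim_T w^{n+1}$ to the merge equivalence $w^n \sim_m w^{n+1}$: the same merges occur, and moreover in the same order, in each of the four relations. This is the only step that consumes the distinction between ``same merges'' and ``same merges in the same order'', and it is already established, so I may simply cite it.

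The conclusion is then immediate from the remark just before Lemma~\ref{lemma-MergeOrder}, applied to the prefix $u$ and the merge-equivalent pair $w^n \sim_m w^{n+1}$: for every $u$ one gets $G_{uw^{n+1}} = G_{uw^{n}}$. To keep the argument self-contained I would unfold this via Remark~\ref{remark:pres_mer}: the vertex set $V_{uw^m}$ is exactly the family of subsets $s \subseteq Q$ whose right-to-right runs over $uw^m$ merge together before merging with any others. Each such run decomposes, at the $u/w^m$ boundary, into an alternation of behavior segments of $u$ and of $w^m$; since $w^n$ and $w^{n+1}$ realise identical segments and merge them in the same order across all four relations, the resulting merge pattern over $uw^n$ coincides with that over $uw^{n+1}$. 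Hence $V_{uw^n} = V_{uw^{n+1}}$, and because an element of $\mathcal{F}_Q$ is uniquely determined by its vertex set, the two forests are equal.

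The main subtlety — and the reason the argument must route through all four behavior relations rather than $\bh_{rr}$ alone — is that a right-to-right run over $uw^m$ need not remain inside the $w^m$ block: it may re-enter $u$ and bounce back, so its shape and its mergings also depend on the $\bh_{\ell\ell}$, $\bh_{\ell r}$ and $\bh_{r\ell}$ behaviors of $w^m$. Packaging precisely this information into $\sim_m$ is exactly what Lemma~\ref{lemma-MergeOrder} achieves, so once that lemma is available the present statement follows as an essentially direct corollary.
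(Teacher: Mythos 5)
Your proof is correct and follows exactly the paper's own route: cite Lemma~\ref{lemma-MergeOrder} to upgrade $w^n \sim_T w^{n+1}$ to the merge equivalence $w^n \sim_m w^{n+1}$, then invoke the remark preceding that lemma (merge-equivalent suffixes give $G_{uv}=G_{uw}$ for every prefix $u$, via the decomposition of right-to-right runs over $uw^m$ into runs over $u$ and partial runs over $w^m$). Your extra unfolding through Remark~\ref{remark:pres_mer} just makes explicit the justification the paper gives for that remark, so the two arguments coincide.
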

\begin{proof}
This lemma comes directly from Lemma~\ref{lemma-MergeOrder} and the previous remark stating that the merge equivalence implies the equivalence for the underlying automaton of the \sst.
\end{proof}

\begin{lemma}
Let $w \in \bar{A}^*$ be such that $w^n \sim_T w^{n+1}$.
Then for every $u \in \bar{A}^*$, $\sigma_{u,w^{n}} \sim \sigma_{u,w^{n+1}}$.
\end{lemma}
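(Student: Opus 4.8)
The plan is to show that the two local substitutions induce the same element of the \stm, i.e. that their projections onto $\cX$ coincide, $\tsig_{u,w^n}=\tsig_{u,w^{n+1}}$; together with Lemma~\ref{TWtocSST-ApSt} this yields that $w^n$ and $w^{n+1}$ act identically on the constructed \sst, which is exactly the aperiodicity we are after. First I would record that, by Lemma~\ref{TWtocSST-ApSt}, $G_{uw^n}=G_{uw^{n+1}}$, so both substitutions share the same domain: the variables $\lambda(s)$ attached to the vertices $s$ of the common graph $G:=G_{uw^n}$. It therefore suffices to compare, vertex by vertex, the words $\tsig_{u,w^n}(\lambda(s))$ and $\tsig_{u,w^{n+1}}(\lambda(s))$ over the variables of $G_u$.

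Second, I would unwind what $\tsig_{u,v}(\lambda(s))$ records. By construction $\sigma_{u,v}$ keeps the production performed while reading the suffix $v$ as explicit letters of $B$, whereas the production performed inside $u$ is referred to through the variables of $G_u$. A right-to-right run of the \TWDFT over $uv$ enters and leaves at the right end of $v$; each maximal excursion it makes into $u$ enters $u$ from the $u/v$ boundary on the right and leaves $u$ on the right, hence is itself a right-to-right run of $u$, whose production is exactly the $\lambda_u$-variable of the corresponding vertex of $G_u$ (via \prop{2}). Consequently $\tsig_{u,v}(\lambda(s))$ is precisely the ordered sequence of right-to-right runs of $u$ that are visited, in the order of production, along the runs grouped at the vertex $s$ of $G_{uv}$.

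Third, I would argue that this ordered sequence depends only on the data preserved by merge equivalence. The pattern of boundary crossings of a right-to-right run of $uv$ is obtained by alternately composing the four behaviours of $u$ with those of $v$; hence the entry states of the successive excursions into $u$—and thus the right-to-right runs of $u$ they trigger—are determined by $\bh_{\ell\ell}(v)$, $\bh_{\ell r}(v)$, $\bh_{r\ell}(v)$, $\bh_{rr}(v)$ together with the behaviours of $u$. Moreover the way these excursions are grouped at a vertex $s$ and concatenated into $\bar{\lambda}(s)$ is governed by the merges occurring inside $v$ and, crucially, by the order in which they occur. By Lemma~\ref{lemma-MergeOrder}, the hypothesis $w^n\sim_T w^{n+1}$ gives $w^n\sim_m w^{n+1}$, so $w^n$ and $w^{n+1}$ have the same four behaviours and realise the same merges in the same order. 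Both the crossing pattern and the grouping/order therefore coincide for $v=w^n$ and $v=w^{n+1}$, whence $\tsig_{u,w^n}(\lambda(s))=\tsig_{u,w^{n+1}}(\lambda(s))$ for every vertex $s$, i.e. $\sigma_{u,w^n}\sim\sigma_{u,w^{n+1}}$.

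The delicate point is the third step, and specifically the treatment of the order: equality of the four behaviours alone already fixes which right-to-right runs of $u$ occur (this is the flow-multiplicity information, which is all the \ftm sees), but to fix the order in which the corresponding $\lambda_u$-variables are concatenated one genuinely needs the order of the merges, which is exactly the extra information supplied by $\sim_m$ through Lemma~\ref{lemma-MergeOrder}. I would make this rigorous by an induction on the number of copies of $w$ already read, tracking the crossing sequence at the $u/v$ boundary together with the partially built labels $\bar{\lambda}$, mirroring the inductive construction of $(G_{wa},\sigma_{w,a})$ used to define the \sst.
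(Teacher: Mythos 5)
Your proposal is correct and follows essentially the same route as the paper: both arguments rest on Lemma~\ref{lemma-MergeOrder} (and Lemma~\ref{TWtocSST-ApSt} for equality of the graphs), and both reduce the claim to the observation that the variable part of $\sigma_{u,v}$ --- which $\lambda_u$-variables appear, how they are grouped at the vertices of $G_{uv}$, and in which order they are concatenated --- is determined by the behaviours of $v$ together with the merges and their order, so that merge equivalence of $w^n$ and $w^{n+1}$ forces $\tsig_{u,w^{n}}=\tsig_{u,w^{n+1}}$. Your write-up is in fact somewhat more explicit than the paper's (which phrases the same fact as tracing vertices of $G_u$ through the graph reduction) about the crucial point that the \emph{order} of the merges, not just the four behaviours, is what pins down the order of the variables in each image.
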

\begin{proof}
By construction, reducing a graph from the proof of Theorem~\ref{Thm:2wSSTb} $G'_{ua}$ to $G_{ua}$ amounts to delete the unnecessary information from $G_u$, i.e.
deleting cycles and reducing paths with no new merges to a single vertex.
Thus each vertex from $G_u$ can either be traced back to a vertex of $G_{ua}$ or is deleted.
Should we forget about the production, the flow of variables then corresponds exactly to this.
Thanks to Lemma~\ref{lemma-MergeOrder}, we know that $w^n$ and $w^{n+1}$ are merge equivalent.
Then given a state $G_u$ and one of its vertex, it can be traced back to the same vertex after reading  $w^n$ or $w^{n+1}$.
Since we have a unique way of mapping vertices to variables, the substitutions
 $\sigma_{uw^{n}}$ and $\sigma_{uw^{n+1}}$ will be equal when production is erased, proving the aperiodicity of the substitution function.
\end{proof}

\subsection*{From $k$-bounded to $1$-bounded \sst}

\KtoO*

\begin{proof}
In order to move from a $k$-bounded \sst to a $1$-bounded \sst, the natural idea is to use copies of each variable.
However, we cannot maintain $k$ copies of each variable all the time: suppose that $X$ flows into $Y$
and $Z$, which both occur in the final output. If we have $k$ copies of $X$, we cannot produce
in a $1$-bounded way $k$ copies of $Y$ and $k$ copies of $Z$. We will thus limit, for each variable $X$, the number of copies 
of $X$ we maintain. In order to get this information, we will use a look-ahead information
on the suffix of the run.

The proof relies on the following fact: suppose that we know at each step what is the substitution 
induced by the suffix of the run. From this substitution, for each variable $X$ we know
the value of the integer $n$ such that $X$ will be involved exactly $n$ times in the final output. We 
can thus copy each variable sufficiently many times and use them to produce this substitution 
in a copyless fashion.

One can observe that there are finitely many substitutions, this information being held in the transition monoid of the \sst.
Then we can compute, at each step and for each possible substitution, a copyless update.
But as a given element of the monoid may have several successors, the update function flows variables 
from one element to 
variables of several elements.
As these variables are never recombined, we get the $1$-boundedness of the construction.

Let $T=(A,B,Q,q_0,Q_f,\delta,\cX,\rho,F)$ be an aperiodic $k$-bounded \sst, $M_T$ be its transition
monoid, and $\eta_T:A^*\to M_T$ be its transition morphism.

We construct $T'=(A,B,Q',q'_0,Q'_f,\delta',\cX',\rho',F')$ where:
\begin{itemize}
\item The set of states $Q'=Q\times\cP(M_T)$ is the current state plus a set of elements of $M_T$ corresponding to the possible images of the current suffix.
\item $q_0'=(q_0,S_0)$ where $S_0=\{m\in M_T\mid \delta(q_0,m)\in Q_f\}$ is the set of relevant possible images of input words.
Here, we are abusing notations as $\delta(q,m)$ stands for $\delta(q,u)$ where $\eta_T(u)=m$. By definition of the transition monoid, we have that $\eta_T(u)=\eta_T(v)$ implies  $\delta(q,u)=\delta(q,v)$, thus this is well defined.
\item $Q'_f=\{ (q,S) \mid q\in F \text{ and }1_{M_T}\in S\}$.
\item $\delta': Q'\times A \to Q'$ is defined by $\delta((q,S),a)=(q',S')$ where $q'=\delta(q,a)$ and 
$S'=\{m\in M_T\mid \eta_T(a)m\in S \}$.
\item  $\cX'=\cX\times M_T \times \{1,\ldots,k\}$. Variables from $\cX'$ will be denoted $X_i^m$ for $X\in \cX, i\leq k$ and $m\in M_T$.
\item The variable update function is defined as follows.
First given a state $q$ of $T$ and an element $m$ of $M_T$, we define $\tilde{\sigma}_{q,m}$ as the projection of the output substitution induced by a run starting on $q$ on a word whose image is $m$, i.e $\tilde{\sigma}_{q,u}=\gamma(q,u)\circ F(\delta(q,u))$ for $\eta_T(u)=m$.
Note that by definition of the transition monoid of an \sst, it is well defined.

Now consider a transition $(q,S)\xrightarrow{a}(q',S')$, $n\in S'$ and $0<i\leq |\tilde{\sigma}_{q',n}|_X$, 
$\rho'((q,S),a,X_i^n)$ is defined similarly to $\rho(q,a,X)$, where all variables are labeled by the element $\eta_T(a)n$ and numbered to ensure the $1$-bounded property.
Such a numbering is possible thanks to the fact that $n$ indicates which variables are used as well as how many times.
This allows us to copy each variables the right amount of times, using different copies at each occurrence. The $k$-bounded property then ensures that we will never need more than $k$ variables for a possible output.

\item $F':Q'_f\to (B\cup \cX')^*$ is defined as follows. Let $(q,S)\in Q'_f$.
The string $F'(q,S)$ is obtained from the string $F(q)$
by substituting each variable $X$ by a variable $X^n_i$, where $0<i\leq |F(q)|_{X}$ and $n=1_{M_T}$.
\end{itemize}
\end{proof}

\KtoOAp*
\begin{proof}
We now have to prove that $T'$ is aperiodic.
 We claim that the runs of $T'$ are of the form
 $(q,S)\xrightarrow{u} (q',S')$ where $q\xrightarrow{u}_T q'$ and $S'=\{ n \in M_T \mid \eta_T(u)n\in S\}$, which holds by construction of $T'$.
 The update for such a run is then the update of $T$ over the run $q\xrightarrow{u}_T q'$, where variables are labeled by elements from $S$ and $S'$ and numbered accordingly.
 Then as $T$ is aperiodic, the $Q$ part of the run is also aperiodic by construction.
 The other part computes sets of runs according to $M_T$, which is also aperiodic.
 Then $T'$ will also be aperiodic as the set $S'$ only depends on the image of the word read, and by definition $\eta_T(u^n)=\eta_T(u^{n+1})$ for $n$ large enough.
\end{proof}

\end{document}